\newcommand{\nc}{\newcommand}  
\nc{\Pf}{{\rm Pf}} \nc{\PfS}{{\rm PfS}} \def\uu{\mathbf{u}}
\nc{\GP}{Grassmann-Pl{\"u}cker } \nc{\adj}{{\rm adj}}
\def\AA{\mathbf{A}}  
  \def\00{\mathbf{0}}
 \def\qq{\mathbf{q}} 
 \def\xx{\mathbf{x}} 
  \def\yy{\mathbf{y}}
 \def\11{\mathbf{1}} 
  \def\vv{\mathbf{v}}
\def\ww{\mathbf{w}}  
\def\calF{\mathcal{F}}  
\def\ww{\mathbf{w}} \def\zz{\mathbf{z}}
   \def\tt{\mathbf{t}}
   \def\MM{\mathbf{M}}
\def\ss{\mathbf{s}} 
\newtheorem{coro}{Corollary} \newtheorem{defi}{Definition} 
\newtheorem{lemm}{Lemma}  \newtheorem{theo}{Theorem}
\begin{document}

\title{\vspace{1.5cm}Non-negative Weighted \#CSPs: An Effective Complexity Dichotomy\vspace{1.2cm}}
\author{Jin-Yi Cai\\ University of Wisconsin, Madison \and
Xi Chen\\ Columbia University \and Pinyan Lu\\ Microsoft Research Asia\vspace{0.8cm} }
\date{}
\maketitle

\begin{abstract}
We prove a complexity dichotomy theorem for all
non-negative weighted counting Constraint Satisfaction Problems (CSP).
This caps a long series of important results on counting problems
including unweighted and weighted graph homomorphisms~\cite{DyerGreenhill,
BulatovGrohe,DirectedHomo,CaiChen} and the celebrated dichotomy theorem for
unweighted \#CSP~\cite{BulatovDalmau,Bulatov,Dyer-Rich,Dyer-Rich2}.
Our dichotomy theorem gives a succinct criterion for tractability.
If a set $\cal F$ of constraint functions satisfies the criterion, then the
counting CSP problem defined by $\cal F$ is solvable in polynomial time;
if it does not satisfy the criterion, then the problem is \#P-hard.
We furthermore show that the question of whether $\cal F$
satisfies the criterion is decidable in NP.

Surprisingly, our tractability criterion is simpler than the previous
criteria for
the more restricted classes of problems, although when specialized to
those cases, they are logically equivalent.
Our proof mainly  uses Linear Algebra, and represents a departure
from Universal Algebra, the dominant methodology in recent years.

\end{abstract}
\thispagestyle{empty}
\newpage

\setcounter{page}{1}

\section{Introduction}

The study of Constraint Satisfaction Problems (CSP) has been one of the most
  active research areas, where enormous progress has been made in recent years.
The investigation of CSP includes at least the following major~bran\-ches:
  Decision Problems --- whether a solution exists \cite{Schaefer,Hell,BulatovDecision,SzegedyKun};
  Optimization Problems --- finding a solution that satisfies the most constraints
  (or in the weighted case achieving the highest total
weight)~\cite{Hastad01,KhotKMO07,AustrinM08,DinurMR09,Raghavendra,Tulsiani,RaghavendraSteurer};
  and Counting Problems --- to count the number of solutions, including its weighted
version~\cite{BulatovDalmau,Bulatov,BulatovSurvey,Rational,Dyer-Rich}.
The decision CSP dichotomy conjecture of Feder and Vardi~\cite{Feder-Vardi}, that
  every decision CSP problem defined by a constraint language $\Gamma$ is either in P or NP-complete, remains open.
A great deal of work has been devoted to the optimization version of CSP, constituting a significant
  fraction of on-going activities in approximation algorithms.

The subject of this paper is on counting CSP; more precisely on
  \emph{weighted} counting Constraint Satisfaction~Pro\-blems, denoted as weighted \#CSP.
For {\it unweighted}   \#CSP, the
problem is usually stated as follows: $D$ is a fixed finite set called
the domain set. A fixed finite set
of constraint predicates $\Gamma = \{\Theta_1, \ldots, \Theta_h\}$
is given,
where each $\Theta_i$ is a relation on $D^{r_i}$ of some finite arity $r_i$.
Then an instance of \#CSP$(\Gamma)$ consists of a finite set of
variables $x_1, \ldots, x_n$, ranging over $D$, and a finite set of
constraints from $\Gamma$, each applied to a subset of these variables.
It defines a new $n$-ary relation $R$ where $(x_1,\ldots,x_n)\in R$ if and only if
  all the constraints are satisfied.
The \#CSP problem then asks for the size of $R$.
In a (non-negatively) weighted \#CSP,
the set $\Gamma$ is replaced by a fixed finite set of
constraint functions, ${\cal F} = \{f_1, \ldots, f_h\}$, where each
$f_i$ maps $D^{r_i}$ to non-negative reals  $\mathbb{R}_{+}$.
An instance of \#CSP$({\cal F})$ consists of
variables $x_1, \ldots, x_n$, ranging over $D$, and a finite set of
constraint functions from ${\cal F}$,
each  applied to a subset of these variables.
It defines a new $n$-ary function $F$: for any assignment $(x_1,\ldots,x_n)$,
  $F(x_1,\ldots,x_n)$ is the product of the constraint function evaluations.
The output is then the so-called {\it partition function}, that is,
the sum of $F$ over all assignments
$\{x_1, \ldots, x_n\} \rightarrow  D$.
The unweighted \#CSP
is the special case where each constraint function
is $0$-$1$ valued.
(A formal definition will be given in Section~\ref{preliminary}.)

Regarding unweighted \#CSP, Bulatov~\cite{Bulatov} proved a sweeping
dichotomy theorem. He gave a criterion, {\it
congruence singularity}, and showed that for any finite set of constraint
predicates $\Gamma$ over any finite domain $D$, if $\Gamma$
satisfies this condition, then \#CSP($\Gamma$) is solvable in P;
otherwise it is \#P-complete.  His proof uses deep structural
theorems from universal algebra \cite{universal,HobbyMcKenzie,FreeseMcKenzie}.  Indeed this approach using
universal algebra has been one of the most exciting developments
in the study of the complexity of CSP in recent years, first used in decision CSP \cite{Jeavons98,JeavonsCohen,BulatovDecision,BulatovLogic}, and has been called the
{\it Algebraic Approach}.

However, this is not the {\it only} approach.
In~\cite{Dyer-Rich}, Dyer and Richerby gave an alternative proof
of the dichotomy theorem for unweighted \#CSP. Their proof
is considerably more direct, and uses no universal algebra other than
the notion of a Mal'tsev polymorphism.
They also showed that the dichotomy is decidable \cite{Dyer-Rich3,Dyer-Rich2}.
Furthermore,
by treating rational weights as integral multiples of
a common denominator, the dichotomy theorem can be extended
to include positive rational weights~\cite{Rational}.

In this paper, we give a complexity dichotomy theorem for all
  non-negative weighted \#CSP($\cal F$).  To describe our approach, let us first
briefly recap the proofs by Bulatov and by Dyer and Richerby.
Bulatov's proof is deeply embedded in a structural theory
of universal algebra
called {\it tame congruence theory} \cite{HobbyMcKenzie}.
(A congruence is an equivalence relation expressible in a given
universal algebra.)
The starting point of this {\it Algebraic Approach}
is the realization of a close connection between
  unweighted \#CSP($\Gamma$) and the {\it relational clone}
$\langle \Gamma \rangle$ generated by $\Gamma$.  $\langle \Gamma \rangle$
is the closure set of all relations expressible
from $\Gamma$ by boolean conjunction $\wedge$
and the existential quantifier $\exists$. A basic property,
called {\it congruence permutability}, is then shown to be a necessary
condition for the tractability of \#CSP($\Gamma$)~\cite{BulatovJeavons, BulatovDalmau, BulatovSurvey}.
It is known from universal algebra that congruence permutability
is equivalent to the existence of {\it Mal'tsev polymorphisms}.
It is also equivalent to the more combinatorial condition of {\it
strong rectangularity} of Dyer and Richerby \cite{Dyer-Rich}:
For any $n$-ary relation $R$ defined by an instance of \#CSP$(\Gamma)$,
  if we partition its $n$ variables into three parts: $\uu=(u_1,\ldots,u_k),\vv=(v_1,\ldots,v_\ell)$
  and $\ww=(w_1,\ldots,w_{n-k-\ell})$, then the following $|D|^k\times |D|^\ell$ matrix $\MM$ must be \emph{block-diagonal}
  after separately permuting its rows and columns:
$M(\uu,\vv)=1$ if there exists a $\ww$ such that $(\uu,\vv,\ww)\in R$; and $M(\uu,\vv)=0$ otherwise.
(See the formal definition in Section \ref{preliminary}.)

Assuming $\Gamma$ satisfies this necessary condition (otherwise \#CSP$(\Gamma)$ is already \#P-hard),
  Bulatov's proof delves much more deeply than Mal'tsev polymorphisms and uses
a lot more results and techniques from universal algebra.
 The Dyer-Richerby proof manages to avoid much of universal algebra.
They went on to give a more combinatorial criterion, called {\it strong balance}:
For any $n$-ary relation $R$ defined by an instance of \#CSP$(\Gamma)$,
  if we partition its $n$ variables into four parts: $\uu=(u_1,\ldots,u_k),\vv=(v_1,\ldots,v_\ell),
  \ww=(w_1,\ldots,w_{t}),\zz=(z_1,\ldots,z_{n-k-\ell-t})$, then the following $|D|^k\times |D|^\ell$ integer
  matrix $\MM$ must be {block-diagonal} and \emph{all of its blocks are of rank $1$} (which we will refer to as a \emph{block-rank-1} matrix):
\begin{equation}\label{exist}
M(\uu,\vv)=\Big|\big\{\ww:\exists\hspace{0.08cm} \zz\ \text{such that}\ (\uu,\vv,\ww,\zz)\in R\big\}\Big|,
\ \ \ \ \ \text{for all $\uu\in D^k$ and $\vv\in D^\ell$.}
\end{equation}
(See the formal definition in Section \ref{sec:equiv}.)
Dyer and Richerby \cite{Dyer-Rich} show that strong balance (which implies strong rectangularity) is the criterion for
  the tractability of \#CSP($\Gamma$).
They further prove that it is equivalent to Bulatov's criterion of
{\it congruence singularity} which is stated in the language of
universal algebra.


The first difficulty we encountered when trying to extend the unweighted dichotomy to weighted \#CSP$(\calF)$~is that
  there is no direct extension of the notion of strong balance above in the weighted world.
While the number of $\ww$ satisfying $R$ on the right side of (\ref{exist})
  can be naturally replaced by the sum of $F$ (any function defined by an \#CSP$(\calF)$ instance) over $\ww$,
  we do not see any easy way to introduce existential quantifiers to this more general weighted setting.
Moreover, the use of existential quantifiers in the notion of strong balance is
  crucial to the proof of Dyer and Richerby:
  their polynomial-time counting algorithm for tractable \#CSP$(\Gamma)$ heavily relies on them.

While there seems to be no natural notion of an existential quantifier
in the weighted setting, we came to~a key observation  that
the notion of strong balance
  is equivalent to the one {\it without using any existential~quan\-tifiers}
 (that is, we only consider
  partitions of the variables into $3$ parts with no $\zz$).
We include the proof of this equivalence in Section \ref{sec:equiv}.
This inspires us to use the following seemingly weaker notion of \emph{balance} for weighted \#CSP$(\calF)$, with no
  existential quantifiers at all:
For any $n$-ary function $F$ defined by a \#CSP$(\calF)$ instance, if we partition its $n$
  variables into three parts: $\uu=(u_1,\ldots,u_k),\vv=(v_1,\ldots,v_\ell)$
  and $\ww=(w_1,\ldots,w_{n-k-\ell})$, then the following $|D|^k\times |D|^\ell$ matrix $\MM$ must be \emph{block-rank-1}:
$$
M(\uu,\vv)=\sum_{\ww\in D^{n-k-\ell}} F(\uu,\vv,\ww),\ \ \ \ \ \text{for all $\uu\in D^k$ and $\vv\in D^\ell$.}
$$

It is easy to show that balance is a necessary condition for the tractability of
  \#CSP$(\calF)$.
But is it also sufficient?
If $\calF$ is balanced, can we solve it in polynomial time?
We show that this is indeed the case by giving a polynomial time
 counting scheme for
  all \#CSP$(\calF)$s with $\calF$ being balanced.
Our algorithm works differently from the one of Dyer and Richerby.
It avoids the use of existential quantifiers and is designed specially for weighted
  and balanced \#CSP$(\calF)$s.
As a result, we get the following dichotomy for non-negatively weighted \#CSP
  with a logically simpler criterion:\vspace{0.03cm}

\begin{theo}[Main]\label{mainmain}
\#CSP$(\calF)$ is in polynomial-time if $\calF$ is balanced; and is \#P-hard otherwise.\vspace{0.03cm}
\end{theo}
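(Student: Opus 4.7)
The plan is to prove the two directions of the dichotomy separately. The hardness direction (if $\calF$ is not balanced then \#CSP$(\calF)$ is \#P-hard) is the simpler half, already flagged as ``easy'' in the excerpt, and proceeds by a gadget construction. The tractability direction, requiring the design of a genuine polynomial-time counting algorithm for balanced $\calF$, is where the bulk of the work lies; the excerpt stresses that this algorithm must be designed specifically for the weighted setting and will differ from Dyer--Richerby's by avoiding existential quantifiers.

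For hardness, if $\calF$ is not balanced then there exist an instance $I$, a partition $(\uu,\vv,\ww)$ of its variables, and a choice of index ranges for which the marginal matrix $M(\uu,\vv)=\sum_\ww F(\uu,\vv,\ww)$ fails to be block-rank-$1$. I would treat $I$ as a weighted gadget with ``ports'' $\uu$ and $\vv$. Given any bipartite graph, placing a copy of $I$ on each edge (identifying its ports with the endpoint variables and summing out the internal $\ww$-variables) yields a \#CSP$(\calF)$ instance whose partition function equals the matrix-based partition function $\eval(M)$. The known dichotomies for non-negative matrix partition functions already guarantee that $\eval(M)$ is \#P-hard whenever $M$ is not block-rank-$1$, which transfers \#P-hardness to \#CSP$(\calF)$ through this reduction.

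For tractability, the structural fact to exploit is that a block-rank-$1$ matrix $M$ factors on each block $B_b$ as $M(\uu,\vv)=\alpha_b(\uu)\beta_b(\vv)$, so that $\sum_{\uu,\vv}M(\uu,\vv)=\sum_b\bigl(\sum_{\uu\in B_b^u}\alpha_b(\uu)\bigr)\bigl(\sum_{\vv\in B_b^v}\beta_b(\vv)\bigr)$. My algorithm would apply this identity recursively: given an instance on variables $\xx$, split $\xx$ into three parts $\uu,\vv,\ww$, use balance to obtain the block-rank-$1$ decomposition of the marginal matrix $M(\uu,\vv)=\sum_\ww F(\uu,\vv,\ww)$, and then compute $Z=\sum_b U_b\cdot V_b$, where each $U_b$ (resp.\ $V_b$) is the block-restricted sum of $\alpha_b$ (resp.\ $\beta_b$). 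Each $U_b$ in turn equals, up to the known scalar $\beta_b(\vv_0)$ at a representative $\vv_0\in B_b^v$, the partition function of the sub-instance obtained by pinning $\vv$ to $\vv_0$; similarly for $V_b$ with $\uu$ pinned. Thus the algorithm recurses on strictly smaller instances, having eliminated all the $\ww$-variables at the current level.

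The main obstacle will be controlling both the cost of extracting the block decomposition and the branching factor of the recursion. First, the matrix $M$ has exponentially many entries, so its block decomposition must be recovered implicitly; I would compute only a polynomial number of entries $M(\uu,\vv)$ --- each itself a partition function of a smaller sub-instance, computable by the same recursive procedure --- and reconstruct the blocks and rank-$1$ factors by $2\times 2$-minor tests and normalization against pivot entries. Second, the number of blocks, and hence the branching factor at each level of the recursion, must be polynomially bounded; I expect this to follow from balance applied uniformly across a carefully chosen family of partitions, which ties the block count of each intermediate marginal to block counts of its restrictions. Third, the sub-instances produced by pinning must themselves satisfy balance so that the recursion can keep invoking the decomposition; this closure property follows immediately from balance being a universal property over all \#CSP$(\calF)$ instances, together with the fact that pinning a variable to a constant yields a new \#CSP$(\calF)$ instance on fewer variables.
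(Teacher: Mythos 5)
Your hardness half is essentially the paper's argument (a gadget that plants copies of $I$ on the edges of a graph and invokes the Bulatov--Grohe matrix dichotomy; the paper symmetrizes first, taking $\AA=\MM\MM^{\text{T}}$ and using two copies of $I$ per edge, while you lean on the bipartite/asymmetric form of that dichotomy), and it is fine in spirit. The tractability half, however, has a genuine gap: your recursion is built on \emph{pinning} variables to constants, and the resulting pinned sub-instances are \emph{not} instances of $(D,\calF)$ unless $\calF$ happens to contain the relevant constant unary functions. Balance is a universal statement only over instances of $(D,\calF)$, so nothing guarantees that the marginal matrices of your pinned sub-instances are block-rank-$1$; the claim that ``this closure property follows immediately'' is exactly the step that fails, and adding constants to a constraint language without changing the complexity is a nontrivial theorem even in the unweighted setting, not a freebie. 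A second, related hole is that your blocks live on an exponentially large row set $D^a$: to normalize against ``pivot entries'' you must first \emph{find} rows $\uu$ with $M(\uu,v)>0$, i.e.\ find satisfying assignments of the underlying relation, and your proposal gives no mechanism for this (nor for the asserted polynomial bound on the branching factor, which is left as an expectation).

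The paper's algorithm avoids both problems and is structured quite differently: it never pins and never recurses on sub-instances. It first shows (from weak balance) that every $f\in\calF$, and hence the instance function $F$, has a \emph{vector representation} $F(\xx)=s_1(x_1)\cdots s_n(x_n)$ on its support; then, using that balance implies strong rectangularity of the underlying relation $R$ (hence a Mal'tsev polymorphism), it invokes the Dyer--Richerby frame machinery to compute projections ${\sf pr}_i R$, the equivalence classes of $\sim_i$, and explicit witness tuples in $R$. With these witnesses it constructs unary functions $t_n,\ldots,t_2$ so that the partial sums satisfy the closed-form identity (\ref{imp}), after which $Z(I)$ is read off directly from $s_1,\ldots,s_n$, $t_2,\ldots,t_n$ and one witness per class of $\sim_1$. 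In other words, the role you want pinning and recursion to play is played in the paper by the frame's witness tuples together with the one-variable corrections $t_i$; without that ingredient (or a proof that pinning is harmless and that pinned instances remain balanced), your recursion cannot be justified as stated.
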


A new ingredient of our proof is the concept of \emph{a vector representation}
for a
  non-negative function.
Let $F$ be a function over $x_1,\ldots,x_n$. Then $s_1,\ldots,s_n:D\rightarrow \mathbb{R}_+$
  is a vector representation of $F$ if for any $(x_1,\ldots,x_n)\in D^n$ such that
  $F(x_1,\ldots,x_n)>0$, we have $F(x_1,\ldots,x_n)=s_1(x_1)\cdots s_n(x_n).$
The first step of our algorithm is to show that given any instance of \#CSP$(\calF)$, where
  $\calF$ is balanced, the function it defines has a vector representation which can be
  computed in polynomial time.
However, $F$ may have a lot of ``holes'' where $s_1(x_1)\cdots s_n(x_n)>0$
  but $F(x_1,\ldots,x_n)=0$ so it is still not clear how to do the sum of $F$ over $x_1,\ldots,x_n$.

The next step is quite a surprise.
Assuming $\calF$ is balanced, we show how to construct one-variable functions
  $t_2,\ldots,t_n:D\rightarrow \mathbb{R}_+$ in polynomial time such that
  for any $(u_1,\ldots,u_n)\in D^n$ with $F(u_1,\ldots,u_n)>0$, we have
\begin{equation}\label{introeq}
\sum_{x_2,\ldots,x_n\in D} F(u_1,x_2,\ldots,x_n)
  = s_1(u_1)\cdot \prod_{j=2}^n \frac{s_j(u_j)}{t_j(u_j)}.
\end{equation}
The intriguing part of (\ref{introeq}) is that its left side only depends on $u_1$
  but it holds for any $(u_1,\ldots,u_n)\in D^n$ as long as
$F(u_1,\ldots,u_n)>0$.
A crucial ingredient we use in constructing $t_2,\ldots,t_n$ and proving (\ref{introeq}) here is
  the succinct  data structure called {\it frame} introduced by Dyer and Richerby for
  unweighted \#CSP \cite{Dyer-Rich} (which is similar to the ``compact representation'' of Bulatov
  and Dalmau \cite{BulatovDalmau2}).
Once we have $t_2,\ldots,t_n$ and (\ref{introeq}), computing the partition function becomes trivial.

After obtaining the dichotomy,
  we also show in Section \ref{sec:dec} that the tractability criterion (that is, whether $\calF$ is balanced or not)
  is decidable in NP.
The proof follows the approach of Dyer and Richerby \cite{Dyer-Rich3} for unweighted \#CSP, with new
  ideas and constructions developed for the weighted setting.


This advance, from unweighted to weighted \#CSP, is akin to the leap from the
Dyer-Greenhill result on counting $0$-$1$ graph homomorphisms \cite{DyerGreenhill}
to the Bulatov-Grohe result for the non-negative case \cite{BulatovGrohe}.
The Bulatov-Grohe result paved the way for all future
developments.  This is because not only the Bulatov-Grohe
 result is intrinsically
important and sweeping
but also they gave
an elegant dichotomy criterion, which allows its easy application.
Almost all future results in this area use the Bulatov-Grohe criterion.
Here our result covers all non-negative counting CSP.
It achieves a similar leap from
the 0-1 case of Bulatov and Dyer-Richerby,
and in the meanwhile, simplifies the dichotomy criterion. Therefore
it is hoped that it will also be useful for future research.



In hindsight, perhaps one may re-evaluate the {\it Algebraic Approach}.
We now know that there is another {\it Alge\-braic Approach}, based
primarily on matrix algebra rather than (relational) universal algebra,
which gives us a more direct and complete dichotomy theorem for
\#CSPs. It is perhaps also a case where the proper generalization,
namely weighted \#CSP, leads to a simpler resolution of the problem
than  the original unweighted \#CSP.

Weighted
  \#CSP has many special cases that have been studied
intensively.
Graph homomorphisms can be considered as a special case of
weighted \#CSP where there is only one binary constraint function.
There has been great advances made on graph homomorphisms
  \cite{DyerGreenhill,BulatovGrohe, DirectedHomo,CaiChen}.
Our dichotomy theorem generalizes all previous
dichotomy theorems where the constraint functions are
non-negative.  Looking beyond non-negatively weighted
counting type problems, in graph homomorphisms~\cite{GGJT,CCL,Thurley}
great progress has already been made.
To extend that to \#CSPs with real or even complex weights will require significantly more effort (even for directed graph homomorphisms~\cite{CaiChen}).
For Boolean \#CSP with complex weights, a dichotomy was obtained~\cite{STOC09}.
Going beyond CSP type problems, holographic algorithms and reductions
are aimed precisely at these counting problems where cancelation
is the main feature. The work on Holant problems and their dichotomy
theorems are the beginning steps in that direction~\cite{STOC09,planar,CHL10}.\newpage

\section{Preliminaries}\label{preliminary}

We start with some definitions about non-negative matrices.

Let $\MM$ be a non-negative $m\times n$ matrix.
We say $\MM$ is \emph{rectangular} if one can permute its
  rows and columns separately, so that $\MM$ becomes a block-diagonal matrix.
More exactly, $\MM$ is rectangular if there exist
  $s$ pairwise disjoint and nonempty subsets of $[m]$, denoted by $A_1,\ldots,A_s$,
  and $s$ pairwise disjoint and nonempty subsets of $[n]$,
  denoted by $B_1,\ldots,B_s$, for some $s\ge 0$, such that for all $i\in [m]$ and $j\in [n]$,
$$
M(i,j)>0\ \ \Longleftrightarrow\ \ \text{$i\in A_k$ and $j\in B_k$ for some $k\in [s]$.}
$$

Now let $\MM$ be a non-negative and rectangular $m\times n$ matrix with
  $s$ blocks $A_1\times B_1,\ldots,A_s\times B_s$.
We say it is \emph{block-rank-$1$} if the $A_k\times B_k$ sub-matrix of
  $\MM$, for every $k\in [s]$, is of rank $1$.

The two lemmas below then follow directly from the definition of block-rank-1 matrices:

\begin{lemm}\label{trivial}
Let $\MM$ be a block-rank-$1$ matrix with $s\ge 1$ blocks:
  $A_1\times B_1,\ldots,A_s\times B_s$.
If $i^*\in A_k$ and $j^*\in B_k$ for some $k\in [s]$, then for any $i\in A_k$ we have
$$
\frac{\sum_{j\in B_k} M(i,j)}{\sum_{j\in B_k} M(i^*,j)}=\frac{M(i,j^*)}{M(i^*,j^*)}.\vspace{0.06cm}
$$
\end{lemm}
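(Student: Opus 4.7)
The plan is to exploit the rank-$1$ structure of the distinguished block $A_k\times B_k$. By the rectangularity assumption, the equivalence $M(i,j)>0 \Longleftrightarrow (i,j)\in \bigcup_\ell A_\ell\times B_\ell$ forces every entry of the sub-matrix indexed by $A_k\times B_k$ to be strictly positive; in particular $M(i^*,j^*)>0$ and $M(i,j^*)>0$ for any $i\in A_k$, so both quotients in the claimed identity are well defined.

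Next I would invoke the standard linear-algebra fact that a non-negative rank-$1$ matrix with strictly positive entries factors as an outer product of two positive vectors. Applied to the block $(M(i,j))_{i\in A_k,\,j\in B_k}$, this produces positive scalars $(a_i)_{i\in A_k}$ and $(b_j)_{j\in B_k}$ such that $M(i,j)=a_ib_j$ for every $i\in A_k$ and $j\in B_k$.

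With this factorization in hand, both sides of the desired identity collapse to $a_i/a_{i^*}$. The left-hand side equals
$$\frac{a_i\sum_{j\in B_k} b_j}{a_{i^*}\sum_{j\in B_k} b_j}=\frac{a_i}{a_{i^*}},$$
since the strictly positive quantity $\sum_{j\in B_k}b_j$ cancels; the right-hand side equals $a_ib_{j^*}/(a_{i^*}b_{j^*})=a_i/a_{i^*}$ because $b_{j^*}>0$. Matching the two expressions finishes the proof.

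There is essentially no obstacle here---this is precisely why the authors frame the lemma as a direct consequence of the definition. The only subtlety worth flagging is that it is rectangularity, not just the block-rank-$1$ hypothesis, that guarantees strictly positive in-block entries, which in turn is what lets us take the outer-product factors positive and legitimizes the cancellations above.
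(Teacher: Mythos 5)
Your argument is correct: the rectangularity built into the definition of block-rank-$1$ gives strict positivity inside the block $A_k\times B_k$, the rank-$1$ hypothesis then yields the positive outer-product factorization $M(i,j)=a_ib_j$, and both sides of the identity reduce to $a_i/a_{i^*}$. The paper gives no explicit proof --- it states that the lemma ``follows directly from the definition'' --- and your write-up is precisely the routine verification the authors have in mind, so there is nothing to correct.
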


\begin{lemm}\label{trivial2}
If $\MM$ is a non-negative matrix but is not block-rank-$1$,
  then there exist two rows of $\MM$ that
  are neither linearly dependent nor orthogonal.
\end{lemm}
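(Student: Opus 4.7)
The plan is to split on whether $\MM$ is rectangular. If $\MM$ is not block-rank-$1$, either it fails to be rectangular, or it is rectangular with some block of rank at least $2$.

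In the rectangular case, pick a block $A_k \times B_k$ whose sub-matrix has rank at least $2$. Then there exist two rows $i, i' \in A_k$ whose restrictions to $B_k$ are not scalar multiples of each other, so the full rows of $\MM$ indexed by $i$ and $i'$ are linearly independent (outside $B_k$ they are both zero by rectangularity). By the definition of rectangularity, every entry $M(i,j)$ and $M(i',j)$ with $j \in B_k$ is strictly positive, so $\sum_{j} M(i,j)M(i',j) > 0$, i.e., the two rows are not orthogonal. This handles the easy case.

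The harder case is when $\MM$ is not rectangular. I would work with the bipartite graph $G$ on rows $\cup$ columns whose edges are the positive entries of $\MM$, and take one of its connected components $C$; let $A$ be its rows and $B$ its columns. If $\MM$ were rectangular, every pair $(i,j) \in A \times B$ would satisfy $M(i,j) > 0$, so since rectangularity fails for some component, I may pick one in which some entry $M(i,j)$ with $i \in A$, $j \in B$ is zero. The goal is to produce two rows $i_1, i_2 \in A$ whose supports $S_{i_1}, S_{i_2} \subseteq B$ overlap but are unequal; such rows are automatically not orthogonal (common positive entry) and not linearly dependent (both are nonzero since each row in $A$ has an edge in $G$, and for nonzero non-negative rows, proportionality forces equal supports).

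The existence of such $i_1, i_2$ is the technical heart. I would argue it by contradiction: if every pair of rows in $A$ had supports that were either equal or disjoint, then the equivalence classes of "equal support" would partition $A$ into groups whose supports partition a subset of $B$ into disjoint pieces, and together with the columns they contain this would split $C$ into several connected components of $G$, contradicting that $C$ is a single component. The only real obstacle is stating this cleanly; once $i_1, i_2$ with overlapping but unequal supports are in hand, the conclusion follows as above.
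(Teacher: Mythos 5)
The paper offers no argument for this lemma at all --- it is asserted to ``follow directly from the definition'' of block-rank-$1$ matrices --- so any written proof is already more than the paper supplies; your two-case structure (rectangular with a block of rank at least $2$, versus not rectangular) is a legitimate way to organize it, and the rectangular case is handled correctly. In the non-rectangular case, though, your contradiction step has a small hole: if every pair of rows in the component's row set $A$ has equal or disjoint support, the ``splits $C$ into several components'' argument only produces a contradiction when there are at least two distinct support classes. If all rows of $A$ share a single common support $S$, nothing splits, and the contradiction must instead come from the zero entry you selected: every column $j\in B$ is adjacent to some row of $A$, hence $j\in S$, hence $M(i,j)>0$ for every $i\in A$, contradicting $M(i,j)=0$. (Equivalently, the zero entry is exactly what forces at least two support classes, after which your splitting argument applies.) With that one line added, your proof is complete.

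It is also worth recording the shorter contrapositive argument that presumably explains why the paper treats the lemma as immediate: suppose every two rows of $\MM$ are linearly dependent or orthogonal. Two nonzero non-negative rows that are linearly dependent are positive multiples of one another and hence have equal supports, while two orthogonal non-negative rows have disjoint supports; grouping the nonzero rows by support therefore yields classes $A_1,\ldots,A_s$ with pairwise disjoint supports $B_1,\ldots,B_s$, every entry of each $A_k\times B_k$ submatrix is positive, and each such submatrix has rank $1$ since its rows are pairwise proportional. This exhibits $\MM$ as rectangular and block-rank-$1$ in one stroke, with no case split on rectangularity and no connectivity argument, which is what your graph-theoretic route spends most of its effort re-deriving.
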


\subsection{Basic \#P-Hardness About Counting Graph Homomorphisms}

Every symmetric and non-negative $n\times n$ matrix $\AA$
  defines a graph homomorphism (or partition) function $Z_\AA(\cdot)$ as follows:
Given any undirected graph $G=(V,E)$, we have
$$
Z_\AA(G)\stackrel{\text{def}}{=}
  \sum_{\xi:V\rightarrow [n]}\hspace{0.036cm} \prod_{uv\in E} A\big(\xi(u),\xi(v)\big).
$$
We need the following important result of Bulatov and Grohe \cite{BulatovGrohe}
  to derive the hardness part of our dichotomy:

\begin{theo}\label{bulatovtheo}
Let $\AA$ be a symmetric and non-negative matrix with algebraic entries,
then the problem of computing $Z_\AA(\cdot)$ is in polynomial time if $\AA$
  is \emph{block-rank-$1$}; and is \#P-hard otherwise.
\end{theo}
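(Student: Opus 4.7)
The plan is to handle the two directions separately. For the tractable direction (block-rank-1 implies polynomial time), I would first use symmetry of $\AA$ to get a common permutation so that $\AA$ becomes block-diagonal; each block $\BB_k$ is then a symmetric, non-negative, rank-1 matrix on some $C_k\subseteq[n]$, which forces $\BB_k=\lambda_k\vv_k\vv_k^T$ for some $\lambda_k>0$ and non-negative $\vv_k$. Given an input graph $G$, I would split it into connected components $G_1,\ldots,G_s$; the partition function factors as $Z_\AA(G)=\prod_i Z_\AA(G_i)$. For each connected $G_i$, any assignment with nonzero contribution must map $V(G_i)$ entirely into one $C_k$, since outside the blocks $\AA$ vanishes. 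Within one block, the rank-1 form factors each edge and one obtains
$$Z_{\BB_k}(G_i)=\lambda_k^{|E(G_i)|}\prod_{v\in V(G_i)}\Big(\sum_{a\in C_k} \vv_k(a)^{d_v}\Big),$$
where $d_v$ is the degree of $v$ in $G_i$. Summing over $k$ and multiplying across components gives a polynomial-time algorithm.

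For the hardness direction (not block-rank-1 implies \#P-hard), the starting point would be Lemma \ref{trivial2}: there exist two rows of $\AA$, indexed by $a$ and $b$, that are neither linearly dependent nor orthogonal. The goal is to reduce a known \#P-hard graph homomorphism problem, such as $Z_\BB$ for $\BB=\bigl(\begin{smallmatrix}0&1\\1&1\end{smallmatrix}\bigr)$ (counting independent sets), to $Z_\AA$. The two standard gadget reductions would be the tools of choice: \emph{thickening}, which replaces each edge of $G$ by $t$ parallel edges and effectively replaces $\AA$ with its entrywise $t$-th power $\AA^{[t]}$; and \emph{stretching}, which replaces each edge by a path of length $t$ and effectively replaces $\AA$ by the ordinary power $\AA^t$. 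By combining these operations, one extracts a $2\times 2$ effective constraint matrix from $\AA$, then uses polynomial interpolation over $t$ to recover $Z_\BB(G)$ from a polynomial number of evaluations of $Z_\AA$ on modified graphs.

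The main obstacle is the hardness direction. Given only the information that two rows of $\AA$ are non-parallel and non-orthogonal, one must argue that iterating thickening and stretching does not cause the relevant $2\times 2$ sub-structure to collapse into something trivial (e.g., a rank-1 or diagonal configuration). The crucial ingredients are spectral in nature: one controls the dominant eigenvector/eigenvalue of the connected components of the support of $\AA$ via Perron--Frobenius theory to ensure that, after sufficient stretching, one row dominates and exposes the non-parallelism, and that after sufficient thickening, the resulting entries are pairwise distinct enough to enable interpolation. A further subtlety is that $\AA$ has algebraic entries, so one must be careful that the interpolation nodes remain generic. Managing these algebraic and spectral technicalities uniformly is the heart of Bulatov and Grohe's original argument.
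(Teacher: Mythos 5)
The first thing to note is that the paper does not prove this statement at all: Theorem~\ref{bulatovtheo} is the Bulatov--Grohe dichotomy for $Z_\AA(\cdot)$, imported from \cite{BulatovGrohe} and used later as a black box (in the proof of Lemma~\ref{lem:hard} the paper merely hands it a symmetric non-negative matrix that is not block-rank-$1$). So there is no internal proof to compare yours against; what you have written is an attempted proof of the cited theorem itself. Your tractable direction is essentially correct: within each block all entries are positive, the symmetric rank-$1$ form $\BB_k=\lambda_k\vv_k\vv_k^{\text{T}}$ is legitimate, connected components must map entirely into a single block, and the factorization $\lambda_k^{|E(G_i)|}\prod_{v\in V(G_i)}\sum_{a\in C_k}\vv_k(a)^{d_v}$ is right, up to minor bookkeeping (an isolated vertex contributes the full domain size, including elements lying in no block, and one should say a word about arithmetic with algebraic entries).

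The hardness direction, however, is where the theorem actually lives, and there you have a plan rather than a proof. Saying that one ``extracts a $2\times 2$ effective constraint matrix'' by thickening and stretching and then interpolates states the goal, not an argument: Lemma~\ref{trivial2} only gives you two rows that are neither parallel nor orthogonal, the induced $2\times 2$ submatrices may contain zeros or be degenerate, and showing that some combination of gadgets isolates a genuinely hard $2\times 2$ structure---while keeping the interpolation systems nonsingular for arbitrary non-negative algebraic weights---is precisely the technical content of Bulatov and Grohe's proof (and of Dyer--Greenhill's before it), which your final paragraph explicitly defers to them. Moreover, the specific spectral mechanism you invoke points the wrong way: stretching replaces $\AA$ by $\AA^t$, and by Perron--Frobenius large powers make the relevant rows \emph{more} nearly parallel (they align with the Perron eigenvector), so stretching alone tends to hide rather than expose non-parallelism; the actual arguments combine $2$-stretches (passing to matrices of the form $\AA\AA^{\text{T}}$, which is exactly the trick the paper itself uses in Section~\ref{app:hardness}), thickenings, and pinning, each of which needs its own justification. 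As a standalone proof of Theorem~\ref{bulatovtheo} this half is a genuine gap; as a reading of the paper, the accurate statement is simply that the theorem is quoted, not proved, and only its statement is needed downstream.
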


\subsection{Weighted \#CSPs}

Let $D=\{1,2,\ldots,d\}$ be the domain set, where the size $d$ will be considered as a constant.
A \emph{weighted} constraint language $\calF$ over the domain $D$
  is a finite set of functions $\{f_1,\ldots,f_h\}$ in which
  $f_i:D^{r_i}\rightarrow \mathbb{R}$ is an
  $r_i$-ary function over $D$ for some $r_i\ge 1$.
The arity $r_i$ of $f_i$, $i\in [h]$, the number of functions $h$ in $\calF$, as well as the
  values of $f_i$, will all be
  considered as constants (except in Section \ref{sec:dec} where the decidability
  of the dichotomy is discussed).
In this paper, we only consider \emph{non-negative}{ weighted} constraint languages
  in which every $f_i$ maps $D^{r_i}$ to \emph{non-negative} and \emph{algebraic} numbers.\newpage

The pair $(D,\calF)$ defines the following problem which we simply
  denote by $(D,\calF)$:\vspace{0.06cm}
\begin{flushleft}
\begin{enumerate}
\item Let $\xx =(x_1,\ldots,x_n)\in D^n$ be a set of $n$ variables over $D$.
The input is then a collection $I$ of $m$ tuples $(f,i_1,\ldots,i_r)$ in
  which $f$ is an $r$-ary function in $\calF$ and $i_1,\ldots,i_r\in [n]$.
We call $n+m$ the size of $I$.\vspace{-0.06cm}

\item The input $I$ defines the following function $F_I$ over $\xx=(x_1,\ldots,x_n)\in D^n$:
$$
F_I(\xx) \stackrel{\text{def}}{=}
  \prod_{(f,i_1,\ldots,i_r)\in I} f(x_{i_1},\ldots,x_{i_r}),\ \ \ \ \ \text{for every $\xx\in D^n$}.\vspace{-0.05cm}
$$
And the output of the problem is the following sum:
$$
Z(I) \stackrel{\text{def}}{=} \sum_{\xx\in D^n} F_I(\xx).
$$
\end{enumerate}
\end{flushleft}

\subsection{Reduction from Unweighted to Weighted \#CSPs}

A special case is when every function in the language is boolean.
In this case, we can view each of the functions as a relation.
We use the following notation for this special case.

An \emph{unweighted} constraint language $\Gamma$ over the domain set $D$
  is a finite set of relations $\{\Theta_1,\ldots,\Theta_h\}$ in which every $\Theta_i$ is
  an $r_i$-ary relation over $D^{r_i}$ for some $r_i\ge 1$.
The language $\Gamma$ defines the following problem which we denote by $(D,\Gamma)$:\vspace{0.06cm}
\begin{flushleft}
\begin{enumerate}
\item Let $\xx =(x_1,\ldots,x_n)\in D^n$ be a set of $n$ variables over $D$.
The input is then a collection $I$ of $m$ tuples $(\Theta,i_1,\ldots,i_r)$ in
  which $\Theta$ is an $r$-ary relation in $\Gamma$ and $i_1,\ldots,i_r\in [n]$.
We call $n+m$ the size of $I$.

\item The input $I$ defines the following relation $R_I$ over $\xx=(x_1,\ldots,x_n)\in D^n$:\vspace{-0.06cm}
$$
\text{$\xx\in R_I\ \Longleftrightarrow$\ \ for every tuple $(\Theta,i_1,\ldots,i_r)\in I$,
  we have $(x_{i_1},\ldots,x_{i_r})\in \Theta$.\vspace{-0.13cm}}
$$
And the output of the problem is the number of $\xx\in D^n$ in the relation $R_I$.
\end{enumerate}
\end{flushleft}


For any non-negative weighted constraint language $\calF=\{f_1,\ldots,f_h\}$,
  it is natural to define its corresponding unweighted constraint language
  $\Gamma=\{\Theta_1,\ldots,\Theta_h\}$, where
  $\xx\in \Theta_i$ if and only if $f_i(\xx)>0$, {for all $i\in [h]$ and $\xx\in D^{r_i}$.}
In Section \ref{sec:htog}, we give a polynomial-time reduction from
  $(D,\Gamma)$ to $(D,\calF)$.\vspace{0.08cm}

\begin{lemm}\label{lem:simple}
Problem $(D,\Gamma)$ is polynomial-time reducible to $(D,\calF)$.\vspace{-0.04cm}
\end{lemm}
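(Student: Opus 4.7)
The plan is to reduce $(D,\Gamma)$ to $(D,\calF)$ by Vandermonde (polynomial) interpolation using only polynomially many oracle calls. Let $V=\{v>0:v=f_i(\yy)\ \text{for some}\ i\in[h],\,\yy\in D^{r_i}\}$ and write $V=\{v_1,\ldots,v_p\}$; since $\calF$ is fixed, $p$ is a constant and $v_1,\ldots,v_p$ lie in a common algebraic number field of constant degree. Given an instance $I$ of $(D,\Gamma)$ with $n$ variables and $m$ tuples, first form the \emph{parallel} instance $I'$ of $(D,\calF)$ by replacing each $(\Theta_i,i_1,\ldots,i_r)\in I$ with $(f_i,i_1,\ldots,i_r)$. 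Then $\xx\in R_I$ iff $F_{I'}(\xx)>0$, and whenever $F_{I'}(\xx)>0$ the value has the form $\prod_j v_j^{k_j}$ with $k_j\ge 0$ and $\sum_j k_j=m$.

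Next, enumerate a superset $W$ of the distinct positive values attained by $F_{I'}$ by iterating over all tuples $(k_1,\ldots,k_p)\in\mathbb{Z}_{\ge 0}^p$ with $\sum_j k_j=m$ and collecting the products $\prod_j v_j^{k_j}$; this gives $|W|\le\binom{m+p-1}{p-1}=O(m^{p-1})$, polynomial in $m$. For each integer $t=1,\ldots,|W|$, let $I^{(t)}$ be the instance of $(D,\calF)$ obtained from $I'$ by repeating every tuple exactly $t$ times; then $F_{I^{(t)}}(\xx)=F_{I'}(\xx)^t$ for every $\xx$, so
\[
Z(I^{(t)})\ =\ \sum_{v\in W} n_v\, v^t,\qquad\text{where}\ \ n_v=\big|\{\xx\in D^n:F_{I'}(\xx)=v\}\big|
\]
(and $n_v=0$ for those $v\in W$ not actually attained). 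Querying the $(D,\calF)$ oracle on $I^{(1)},\ldots,I^{(|W|)}$ yields a linear system in $\{n_v\}_{v\in W}$ whose coefficient matrix $[v^t]_{t,v}$ is a diagonal rescaling of a Vandermonde matrix on the distinct values in $W$ and is therefore invertible. Solving this system recovers every $n_v$, and the algorithm outputs $|R_I|=\sum_{v\in W} n_v$.

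The only nontrivial point, and the reason the approach is polynomial, is the bound $|W|=O(m^{p-1})$: it relies crucially on $p$ being a constant, which is guaranteed because $\calF$ is fixed. All the algebraic-number arithmetic needed (distinctness testing while building $W$, exponentiation $v^t$, and Gaussian elimination on the Vandermonde system) is routine in a fixed number field of constant degree with constant bit-complexity generators, so each step runs in polynomial time. I therefore expect no real obstacle beyond correctly identifying the interpolation set $W$; everything else is standard Vandermonde inversion, and summing the recovered $n_v$ completes the reduction.
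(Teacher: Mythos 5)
Your proposal is correct and follows essentially the same route as the paper: replace each relation by its corresponding function, power the instance by repeating every tuple $t$ times so that $Z(I^{(t)})=\sum_v n_v v^t$, enumerate a polynomial-size superset of the possible positive values (the paper's $\textsc{Value}_m$, your $W$), and invert the resulting Vandermonde system to recover the counts and hence $|R_I|$. The only cosmetic difference is that the paper indexes exponent vectors by (function, argument-tuple) pairs while you index them by distinct function values, which changes nothing substantive.
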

\begin{coro}\label{coro:trivial}
If $(D,\calF)$ is not \#P-hard, then neither is $(D,\Gamma)$.
\end{coro}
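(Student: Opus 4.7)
The plan is a polynomial-time interpolation reduction that turns the single ``natural'' weighted query into polynomially many weighted queries whose answers pin down $|R_I|$. Given an instance $I$ of $(D,\Gamma)$ with $n$ variables and $m$ relation-tuples, I would first form the companion weighted instance $I'$ of $(D,\calF)$ by replacing every occurrence of $\Theta_i$ with its corresponding $f_i$. Since $\Theta_i=\{\xx:f_i(\xx)>0\}$ by hypothesis, the induced product $F_{I'}(\xx)$ is positive exactly when $\xx\in R_I$, so a single oracle call to $(D,\calF)$ returns the weighted total $\sum_{\xx\in R_I}F_{I'}(\xx)$ rather than the desired count; the real task is to strip off these unknown weights.

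For each $k\ge 1$, let $I^{(k)}$ denote the instance obtained from $I'$ by duplicating every tuple $k$ times. Then $F_{I^{(k)}}(\xx)=F_{I'}(\xx)^k$, so
$$Z(I^{(k)}) \;=\; \sum_{\xx\in R_I} F_{I'}(\xx)^k \;=\; \sum_{j=1}^{s}c_j\,\beta_j^{\,k},$$
where $\beta_1,\ldots,\beta_s$ are the distinct positive values attained by $F_{I'}$ on $R_I$ and $c_j=|\{\xx\in R_I:F_{I'}(\xx)=\beta_j\}|$, so that $|R_I|=\sum_j c_j$. Each $\beta_j$ is a product of $m$ elements of the fixed set $S$ of positive values attained by functions in $\calF$; because $\calF$ and the arities are constants, $|S|=t$ is a constant and the number of distinct such products satisfies $s\le\binom{m+t-1}{t-1}=O(m^{\,t-1})$, polynomial in $m$. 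Moreover, by enumerating all multisets of size $m$ drawn from $S$ and forming their products, I can write down in polynomial time an explicit superset of $\{\beta_1,\ldots,\beta_s\}$, with equality tests carried out inside the fixed-degree algebraic number field generated by $S$.

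With the $\beta_j$'s explicitly listed, I would issue $s$ oracle calls to obtain $Z(I^{(1)}),\ldots,Z(I^{(s)})$ and solve the linear system whose coefficient matrix has entries $\beta_j^{\,k}$ for $k,j\in[s]$. Pulling a factor $\beta_j$ out of column $j$ converts it into a Vandermonde matrix in the pairwise distinct, nonzero nodes $\beta_1,\ldots,\beta_s$, so the system is nonsingular and the $c_j$'s are recovered uniquely in polynomial time; spurious $\beta_j$'s in the enumeration simply yield $c_j=0$. Returning $\sum_j c_j$ then outputs $|R_I|$.

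The main obstacle I anticipate is securing the bound $s=\text{poly}(m)$: it relies essentially on the constant size of $\calF$, and hence of $S$, so that multisets of size $m$ from $S$ number only polynomially many. Without this, both the enumeration of candidate $\beta_j$'s and the $s\times s$ Vandermonde inversion could blow up super-polynomially; with it in hand, constructing each $I^{(k)}$, performing arithmetic in a constant-degree number field, and inverting the linear system are all routine polynomial-time steps, and Corollary~\ref{coro:trivial} is then an immediate consequence.
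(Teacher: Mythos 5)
Your proposal is correct and follows essentially the same route as the paper's proof of Lemma~\ref{lem:simple} in Section~\ref{sec:htog}: replace each relation by its corresponding function, power the instance by duplicating tuples $k$ times, observe that the possible values of $F$ form a polynomially-sized, explicitly enumerable set (the paper's $\textsc{Value}_m$, your multiset products over $S$), and recover the level-set counts by solving a Vandermonde system, summing them to get $|R_I|$. The only cosmetic difference is that you make the algebraic-number arithmetic explicit, which the paper leaves implicit.
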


\subsection{Strong Rectangularity}

In the proof of the complexity dichotomy theorem for unweighted \#CSPs
  \cite{Bulatov,Dyer-Rich}, an important necessary
  condition
  for $(D,\Gamma)$ being not \#P-hard is \emph{strong rectangularity}:\vspace{0.03cm}\newpage

\begin{defi}[Strong Rectangularity]
We say $\Gamma$ is \emph{strongly rectangular} if for any input
  $I$ of $(D,\Gamma)$ \emph{(}which defines an $n$-ary relation $R_I$
  over $(x_1,\ldots,x_n)\in D^n$\emph{)} and for any
  integers $a,b:1\le a <b\le n$, the following $d^a\times d^{b-a}$ matrix $\MM$ is \emph{rectangular}:
the rows of $\MM$ are indexed by $\uu\in D^a$ and the columns are indexed
  by $\vv\in D^{b-a}$, and
$$
M(\uu,\vv)=\Big|\big\{\ww\in D^{n-b}:(\uu,\vv,\ww)\in R_I\big\}\Big|,\ \ \ \ \
\text{for all $\uu\in D^a$ and $\vv\in D^{b-a}$.}
$$
For the special case when $b=n$, we have $M(\uu,\vv)=1$ if $(\uu,\vv)\in R_I$ and
  $M(\uu,\vv)=0$ otherwise.\vspace{0.03cm}
\end{defi}

The following theorem can be found in \cite{Bulatov} and \cite{Dyer-Rich}:\vspace{0.03cm}

\begin{theo}\label{theo:rec}
If $\Gamma$ is not strongly rectangular, then $(D,\Gamma)$ is \#P-hard.\vspace{0.03cm}
\end{theo}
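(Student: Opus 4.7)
The plan is to reduce a provably \#P-hard graph-homomorphism counting problem to $(D,\Gamma)$ via the Bulatov-Grohe dichotomy (Theorem~\ref{bulatovtheo}). Let $I$ be the witness input and $1\le a<b\le n$ the split for which the $d^a\times d^{b-a}$ count matrix
$$
M(\uu,\vv)=\big|\big\{\ww\in D^{n-b}:(\uu,\vv,\ww)\in R_I\big\}\big|
$$
fails to be rectangular. I would take $\AA=\MM\MM^T$, a symmetric non-negative $d^a\times d^a$ matrix with integer entries of constant size (as $a,b,n,d$ are all constants).

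For the reduction of $Z_\AA$ to $(D,\Gamma)$, given an input graph $G=(V,E)$ I build a CSP instance $J$ of $(D,\Gamma)$ as follows. For each vertex $v\in V$ introduce a fresh block $\uu_v$ of $a$ variables over $D$; for each edge $e=(v_1,v_2)\in E$ introduce a fresh block $\vv_e$ of $b-a$ variables together with two fresh blocks $\ww_{e,1},\ww_{e,2}$ of $n-b$ variables, and install two renamed copies of the constraint list of $I$, one on $(\uu_{v_1},\vv_e,\ww_{e,1})$ and one on $(\uu_{v_2},\vv_e,\ww_{e,2})$. Summing the variables of each edge gadget separately, one edge of $G$ contributes the factor $\sum_{\vv_e}M(\uu_{v_1},\vv_e)M(\uu_{v_2},\vv_e)=(\MM\MM^T)(\uu_{v_1},\uu_{v_2})$, so $Z(J)=Z_{\MM\MM^T}(G)$. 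Since $|I|$ is a constant, $|J|$ is polynomial in $|G|$.

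It remains to verify that $\MM\MM^T$ is not block-rank-$1$, and this is the heart of the argument. I would prove the contrapositive: if $\MM\MM^T$ is block-rank-$1$ with blocks $A_1,\ldots,A_s$ (any other rows of $\MM\MM^T$ being identically zero, hence zero rows of $\MM$ as well), let $\mm_\uu$ denote the $\uu$-th row of $\MM$. Rank $1$ of the Gram submatrix $(\mm_\uu\cdot\mm_{\uu'})_{\uu,\uu'\in A_k}$ forces the non-negative vectors $\{\mm_\uu:\uu\in A_k\}$ to be pairwise proportional, so they share a common support $B_k\subseteq D^{b-a}$. Across distinct blocks, $\mm_\uu\cdot\mm_{\uu'}=0$ together with non-negativity forces $B_{k_1}\cap B_{k_2}=\emptyset$ for $k_1\ne k_2$. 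Therefore the support of $\MM$ is exactly $\bigcup_k A_k\times B_k$, contradicting that $\MM$ is not rectangular. Applying Theorem~\ref{bulatovtheo} to $\MM\MM^T$ and composing with the above reduction then delivers \#P-hardness of $(D,\Gamma)$.

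The step I expect to require the most care is precisely this rectangularity-vs-block-rank-$1$ passage for $\MM\MM^T$: both non-negativity (to convert vanishing inner products into disjoint supports) and the PSD-Gram-matrix structure (to pass from rank $1$ to proportional rows, hence equal supports) are essential here. Everything else---the edge gadget realising $\MM\MM^T$ as a partition function over $(D,\Gamma)$, and the invocation of Bulatov-Grohe---is a standard gluing construction.
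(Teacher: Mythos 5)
Your proposal is correct, and it is worth noting that the paper does not actually prove Theorem~\ref{theo:rec} at all --- it quotes it from Bulatov and Dyer--Richerby. What you have written is essentially the construction the paper does spell out for the weighted analogue, Lemma~\ref{lem:hard} in Section~\ref{app:hardness}: form $\AA=\MM\MM^{\text{T}}$ from the witness instance $I$ and the split $a<b$, realize $Z_{\AA}(G)$ over $(D,\Gamma)$ by one $a$-block per vertex and, per edge, a shared $(b-a)$-block with two private $(n-b)$-tails carrying two copies of $I$, and invoke Theorem~\ref{bulatovtheo}; since $I$ is fixed, the reduction is polynomial. The only place you genuinely diverge is the passage showing $\AA$ is not block-rank-$1$: the paper's weighted argument starts from failure of block-rank-$1$ and uses Lemma~\ref{trivial2} (two rows neither proportional nor orthogonal yield an all-positive rank-$2$ $2\times 2$ submatrix of $\AA$), whereas you start from failure of rectangularity and prove the contrapositive directly --- a rank-$1$ Gram block forces pairwise proportional, hence common-support, rows, and vanishing cross-block inner products of non-negative vectors force disjoint supports, so $\AA$ block-rank-$1$ would make $\MM$ rectangular. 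That argument is sound (you tacitly use that the blocks of the symmetric matrix $\AA$ can be taken of the form $A_k\times A_k$ because its diagonal is positive on the nonzero rows, which is easy to justify); alternatively, since block-rank-$1$ presupposes rectangularity, non-rectangularity already implies $\MM$ is not block-rank-$1$ and you could have reused Lemma~\ref{trivial2} verbatim, exactly as the paper does in its weighted hardness proof.
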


As a result, if $(D,\calF)$ is not \#P-hard, then $\Gamma$ must be
  strongly rectangular by Corollary \ref{coro:trivial} and Theorem \ref{theo:rec},
  where $\Gamma$ is the unweighted language that corresponds to $\calF$.
The strong rectangularity of $\Gamma$ then gives us the following
  algorithmic results from \cite{Dyer-Rich3}, using the succinct and efficiently
  computable data structure called frame.
They turn out to be very useful later in
  the study of the original weighted problem $(D,\calF)$.
We start with some notation.

Let $I$ be an input instance of $(D,\Gamma)$ which defines a relation $R$ over
  $n$ variables $\xx=(x_1,\ldots,x_n)$.\vspace{0.03cm}

\begin{defi}
For any $i\in [n]$, we use ${\sf pr}_i R \subseteq D$ to denote the projection of
  $R$ on the $i$th coordinate: $a\in {\sf pr}_i R$ if and only if there exist
  tuples $\uu\in D^{i-1}$ and $\vv\in D^{n-i}$ such that $(\uu,a,\vv)\in R$.

We define the following relation $\sim_i$ on ${\sf pr}_i R$: $a\sim_i b$ if
  there exist tuples $\uu\in D^{i-1}$ and $\vv_a,\vv_b\in D^{n-i}$ such that
  $(\uu,a,\vv_a)\in R$ and $(\uu,b,\vv_b)\in R$.\vspace{0.03cm}
\end{defi}

\begin{lemm}[\cite{Dyer-Rich3}]\label{usefulalg}
If $\Gamma$ is strongly rectangular then given any input $I$
  of $(D,\Gamma)$ which defines a relation $R$, we have
\begin{flushleft}
\begin{itemize}
\item[\emph{(A).}] For any $i\in [n]$, we can compute the set ${\sf pr}_i R$
  in polynomial time in the size of $I$.
Moreover, for every $a\in {\sf pr}_i R$, we can find a tuple
  $\uu\in R$ such that $u_i=a$ in polynomial time.\vspace{-0.08cm}
\item[\emph{(B).}] For any $i\in [n]$, the relation $\sim_i$ must be an
  \emph{equivalence} relation
  and can be computed in polynomial\\ time.
We will use $\mathcal{E}_{i,k}\subseteq D$, $k=1,2,\ldots,$ to denote
  the equivalent classes of $\sim_i$.\vspace{-0.08cm}
\item[\emph{(C).}] For any equivalence class $\mathcal{E}_{i,k}$, we can find, in polynomial time, a tuple
  $\uu^{[i,k]}\in D^{i-1}$ as well as a tuple $\vv^{[i,k,a]}\in D^{n-i}$
  for each element $a\in \mathcal{E}_{i,k}$ such that
  $(\uu^{[i,k]},a,\vv^{[i,k,a]})\in R$ for all $a\in \mathcal{E}_{i,k}.$\vspace{0.06cm}
\end{itemize}
\end{flushleft}
\end{lemm}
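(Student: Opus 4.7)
My plan is to settle (B) first, which falls out of strong rectangularity almost immediately, and then use (B) together with the known polynomial-time tractability of decision and search for CSPs admitting a Mal'tsev polymorphism --- an algorithmic consequence of strong rectangularity, as recalled in the introduction --- to handle (A) and (C).

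For (B), fix $i$ with $2\le i\le n$ and apply the strong rectangularity definition with $(a,b)=(i-1,i)$: the rows of the resulting matrix $\MM$ are indexed by $\uu=(x_1,\ldots,x_{i-1})\in D^{i-1}$, the columns by $x_i\in D$, and the counted extensions range over $(x_{i+1},\ldots,x_n)\in D^{n-i}$. Rectangularity decomposes the support of $\MM$ into disjoint combinatorial rectangles $A_1\times B_1,\ldots,A_s\times B_s$ with $B_k\subseteq D$. By the definition of $\sim_i$, $a\sim_i b$ iff some prefix $\uu$ satisfies $M(\uu,a)>0$ and $M(\uu,b)>0$, which is in turn equivalent to $a$ and $b$ lying in the same $B_k$. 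Hence $\sim_i$ is an equivalence relation whose classes $\mathcal{E}_{i,k}$ are exactly the $B_k$. The boundary case $i=1$ is trivial: $\sim_1$ is the all-ones relation on ${\sf pr}_1 R$ by inspection.

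For (A), since $|D|=d$ is a constant, determining ${\sf pr}_i R$ amounts to answering at most $d$ decision queries of the form ``does the instance obtained from $I$ by pinning $x_i$ to $a$ have a satisfying assignment?''. Because $\Gamma$ is strongly rectangular it admits a Mal'tsev polymorphism, so by Bulatov--Dalmau each such decision is in polynomial time; the search version then follows by standard self-reducibility (freeze one additional coordinate at a time and reinvoke the decision oracle). This yields ${\sf pr}_i R$ and, for each $a\in{\sf pr}_i R$, a witness tuple $\uu\in R$ with $u_i=a$. To complete (B), deciding $\sim_i$ in polynomial time reduces to: pick any witness $(\uu^{a},a,\vv^{a})\in R$ and run the search algorithm once more to check whether $\uu^{a}$ admits an extension with $x_i=b$; by the block structure of the previous paragraph this answers correctly.

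For (C), all the work is done. For each class $\mathcal{E}_{i,k}$, pick any $a_0\in\mathcal{E}_{i,k}$ and use (A) to obtain $(\uu,a_0,\vv_0)\in R$; set $\uu^{[i,k]}:=\uu$. The rectangular decomposition from (B) forces $\uu\in A_k$, so $\uu^{[i,k]}$ admits an extension with $x_i=a$ for every $a\in B_k=\mathcal{E}_{i,k}$. For each such $a$, invoke the search algorithm on the sub-instance with $(x_1,\ldots,x_{i-1})$ pinned to $\uu^{[i,k]}$ and $x_i$ pinned to $a$ to produce the required $\vv^{[i,k,a]}$. The main obstacle is not in any of these individual steps but in the black-box tractability of decision/search CSP for Mal'tsev languages --- the Bulatov--Dalmau compact-representation machinery --- without which even computing a single ${\sf pr}_i R$ is unclear; once that is granted, the remainder is bookkeeping around the rectangular block structure of (B).
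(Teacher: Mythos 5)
Your argument is essentially correct, but note that the paper itself does not prove this lemma: it is imported directly from Dyer and Richerby, whose proof constructs a \emph{frame} for the instance (a succinct representation of $R$, closely related to the Bulatov--Dalmau compact representation) and reads (A), (B), (C) off that data structure. Your route is organized differently: you derive (B) and the rectangular block structure directly from strong rectangularity applied with $(a,b)=(i-1,i)$ --- this part, including the identification of the classes $\mathcal{E}_{i,k}$ with the column blocks $B_k$ and the degenerate case $i=1$, is clean and exactly right --- and you then reduce (A), the testing of $a\sim_i b$, and (C) to decision/search for CSPs with a Mal'tsev polymorphism via pinning and self-reducibility. What this buys is modularity: the Bulatov--Dalmau machinery is used as a black box instead of re-deriving frame manipulations. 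What it costs is reliance on two imported facts that you should state explicitly: (i) strong rectangularity of $\Gamma$ is equivalent to the existence of a Mal'tsev polymorphism --- this is itself a nontrivial theorem of Dyer and Richerby, recalled but not proved in this paper, and it is the bridge that lets you invoke Bulatov--Dalmau at all; and (ii) the pinned instances you query are not literally instances of $(D,\Gamma)$, so you need that adding singleton unary constraints preserves the Mal'tsev polymorphism, which holds because Mal'tsev operations are idempotent (equivalently, the compact-representation algorithm supports fixing a variable's value directly). With those two remarks made explicit, the remaining steps --- the single decision query for testing $a\sim_i b$ from a witness for $a$, and the reuse of one prefix $\uu^{[i,k]}$ for all $a\in\mathcal{E}_{i,k}$ in (C) --- are fully justified by the block structure you established in (B), and the lemma follows.
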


As a corollary, if $(D,\calF)$ is not \#P-hard, then we are able to use all
  the algorithmic results above for $(D,\Gamma)$ as subroutines, in the quest
  of finding a polynomial-time algorithm for $(D,\calF)$.

\section{A Dichotomy for Non-negative Weighted \#CSPs and its Decidability}

In this section, we prove a dichotomy theorem for all non-negative
  weighted \#CSPs and show that the characterization can be
  checked in NP.
The lemmas used in the proofs will be proved in the rest of the paper.

In the proof of our dichotomy theorem as well as its decidability, the following
  two notions of \emph{weak balance} and \emph{balance} play a crucial role.
It is similar to and, in some sense, weaker
  than the concept of \emph{strong balance} used in \cite{Dyer-Rich3}.
(Notably we do not use any existential quantifier in the definitions.)

\begin{defi}[Weak Balance]\label{def:sb}
We say $\calF$ is \emph{weakly balanced} if for any input instance $I$ of $(D,\calF)$
  \emph{(}which defines a non-negative function $F(x_1,\ldots,x_n)$ over $D$\emph{)}
  and for any integer $a:1\le a< n$, the following
  $d^a\times d$ matrix $\MM$ is \emph{block-rank-$1$}:
the rows of $\MM$ are indexed by $\uu\in D^{a}$ and the columns
  are indexed by $v\in D $, and
$$
M(\uu,v)= \sum_{\ww\in D^{n-a-1}} F(\uu,v,\ww),\ \ \ \ \ \text{for all
  $\uu\in D^{a}$ and $v\in D $.}
$$
For the special case when $a+1=n$, we have $M(\uu,v)=F(\uu,v)$ is block-rank-$1$.
\end{defi}

\begin{defi}[Balance]\label{defi:balance}
We call $\calF$ \emph{balanced} if for any input instance $I$ of $(D,\calF)$
  \emph{(}which defines a non-negative function $F(x_1,\ldots,x_n)$ over $D$\emph{)}
  and for any integers $a,b:1\le a<b\le n$, the following
  $d^a\times d^{b-a}$ matrix $\MM$ is \emph{block-rank-$1$}:
the rows of $\MM$ are indexed by $\uu\in D^{a}$ and the columns
  are indexed by $\vv\in D^{b-a} $, and
$$
M(\uu,\vv)= \sum_{\ww\in D^{n-b}} F(\uu,\vv,\ww),\ \ \ \ \ \text{for all
  $\uu\in D^{a}$ and $\vv\in D^{b-a} $.}
$$
For the special case when $b=n$, we have $M(\uu,\vv)=F(\uu,\vv)$ is block-rank-$1$.
\end{defi}

It is clear that \emph{balance} implies \emph{weak balance}.
We prove the following complexity dichotomy theorem.\vspace{0.06cm}

\begin{theo}\label{maintheorem}
$(D,\calF)$ is in P if $\Gamma$ is {strongly rectangular} and $\calF$ is
  {weakly balanced}; and is \#P-hard otherwise.\vspace{0.06cm}
\end{theo}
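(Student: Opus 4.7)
The plan is to prove the two directions separately, beginning with hardness. If $\Gamma$ is not strongly rectangular, then Theorem~\ref{theo:rec} gives that $(D,\Gamma)$ is \#P-hard, and Corollary~\ref{coro:trivial} transfers this to $(D,\calF)$. In the remaining hardness case, $\Gamma$ is strongly rectangular but $\calF$ is not weakly balanced, so some instance $I$ of $(D,\calF)$ yields a $d^a\times d$ non-negative marginal matrix $\MM$ that is not block-rank-$1$. Lemma~\ref{trivial2} then produces two rows of $\MM$ that are neither orthogonal nor linearly dependent. I would symmetrize by the standard trick of glueing two copies of $I$ along a shared ``$v$''-variable and summing it out, producing the symmetric non-negative matrix $\NN=\MM\MM^T$ indexed by $\uu,\uu'\in D^a$; the two offending rows of $\MM$ force a $2\times 2$ all-positive submatrix of $\NN$ of rank $2$, so $\NN$ is not block-rank-$1$. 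Treating the compound index $\uu\in D^a$ as a single domain element of constant size, Theorem~\ref{bulatovtheo} of Bulatov and Grohe then yields \#P-hardness of $(D,\calF)$.

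For tractability, assume $\Gamma$ is strongly rectangular and $\calF$ is weakly balanced. The first step is to promote weak balance to full balance: by induction on $b-a$, using that pinning variables to fixed witnesses supplied by Lemma~\ref{usefulalg} produces another \#CSP$(\calF)$ instance (up to a positive constant), one reduces each $d^a \times d^{b-a}$ marginal matrix of Definition~\ref{defi:balance} to a sequence of marginals with only one ``column'' variable to which the weak-balance hypothesis applies, with strong rectangularity of $\Gamma$ forcing the block structures across stages to align consistently. Now fix any instance defining $F(x_1,\ldots,x_n)$. Applied at $(a,b)=(1,n)$, balance asserts $F$ itself is block-rank-$1$, so on each support block $F$ factors as a product $s_1(x_1)\cdots s_n(x_n)$; the $s_j$ can be read off in polynomial time from ratios of $F$-values at canonical witnesses supplied by Lemma~\ref{usefulalg}(A)--(C), with independence of the choice of witness certified by Lemma~\ref{trivial}.

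The heart of the algorithm is the construction of $t_2,\ldots,t_n$ satisfying identity~(\ref{introeq}). For each $j\ge 2$ and each equivalence class $\mathcal{E}_{j,k}$ from Lemma~\ref{usefulalg}(B), I would fix a prefix witness $\uu^{[j,k]}$, a base element $a_0\in \mathcal{E}_{j,k}$, and the suffix $\vv^{[j,k,a_0]}$, and then define $t_j(u_j)$ for $u_j\in \mathcal{E}_{j,k}$ as the appropriate ratio of the marginal sum $\sum_{x_{j+1},\ldots,x_n} F(\uu^{[j,k]}, u_j, x_{j+1},\ldots,x_n)$ to the corresponding sum at $u_j = a_0$, scaled by $s_j(u_j)$. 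Block-rank-$1$ of the $d^{j-1}\times d$ marginal matrix whose rows are indexed by prefixes and whose columns are indexed by $x_j$, together with Lemma~\ref{trivial}, guarantees this ratio is independent of the chosen witness. A descending induction on $j$ from $n$ down to $2$, at each step summing out $x_j$ via the definition of $t_j$, assembles the identity~(\ref{introeq}). The partition function $Z(I)$ is then computed in polynomial time as a sum of the right-hand side of (\ref{introeq}) over $u_1\in {\sf pr}_1 R$, with $(u_2,\ldots,u_n)$ chosen as a canonical witness extension from Lemma~\ref{usefulalg}(C) for each $u_1$, aggregated across support blocks.

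The main obstacle I expect is handling the ``holes'' in the support of $F$ created by the vector representation: within a support block one may have $s_1(u_1)\cdots s_n(u_n) > 0$ while $F(u_1,\ldots,u_n) = 0$, so sums of the product $s_1(x_1)\cdots s_n(x_n)$ do not agree with sums of $F$. This forces the entire argument for (\ref{introeq}) to operate through frame-accessible witnesses from Lemma~\ref{usefulalg}, and block-rank-$1$ is precisely what makes the resulting ratios canonical. The other delicate point is the symmetrization in the hardness direction, where non-negativity and failure of block-rank-$1$ must be maintained simultaneously when passing from $\MM$ to $\NN$; this is where the specific conclusion of Lemma~\ref{trivial2} (neither orthogonal nor linearly dependent) is essential to guarantee a $2\times 2$ positive-entry submatrix of $\NN$ with nonzero determinant.
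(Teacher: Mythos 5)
Your hardness direction is essentially the paper's own argument (its Lemma~\ref{lem:hard}, specialized to $b=a+1$): glue two copies of $I$ along the column variable, obtain the symmetric matrix $\MM\MM^{\text{T}}$ which fails block-rank-$1$ by Lemma~\ref{trivial2}, and reduce $Z_{\MM\MM^{\text{T}}}(\cdot)$ to $(D,\calF)$ via Theorem~\ref{bulatovtheo}; you should spell out the per-graph gadget ($a$ variables per vertex, the glued double copy of $I$ per edge), but the idea is the right one. The tractability direction, however, has genuine gaps. First, your opening step --- promoting weak balance to full balance by ``pinning variables to fixed witnesses,'' claiming this produces another \#CSP$(\calF)$ instance up to a positive constant --- is not valid: pinning a variable to a prescribed domain element is not expressible as an instance of $(D,\calF)$ unless the corresponding constant (unary) functions happen to lie in $\calF$, and no such closure is available here. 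Moreover the promotion is unnecessary: the paper's algorithm runs on weak balance alone. Second, even granting balance, your factorization claim is too quick: block-rank-$1$ of $F$ at the single cut $(a,b)=(1,n)$ only yields $F(\xx)=s_1(x_1)\cdot g(x_2,\ldots,x_n)$ on each support block, not the full product $s_1(x_1)\cdots s_n(x_n)$. The vector representation requires an induction over all prefix marginals $f^{[\ell]}$ being block-rank-$1$ (which is exactly weak balance), as in Lemma~\ref{lem:vector}; the paper then avoids your witness-ratio extraction entirely by vector-representing each constraint function $f\in\calF$ once (constant-size objects) and multiplying the factors per variable to get $(s_1,\ldots,s_n)$ for $F_I$.

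The third and most serious gap is in your construction of $t_j$: you define $t_j$ as a ratio of marginal sums $\sum_{x_{j+1},\ldots,x_n}F(\uu^{[j,k]},u_j,x_{j+1},\ldots,x_n)$, but computing such sums is precisely the problem being solved, and you never say how they become polynomial-time computable. The paper's key mechanism is that the descending induction establishes identity~(\ref{imp}) for indices greater than $j$ \emph{before} $t_j$ is defined, so each needed entry $M(\uu^{[j,k]},b)$ has the closed form~(\ref{concise}) evaluated at an arbitrary frame witness $(\uu^{[j,k]},b,\vv^{[j,k,b]})\in R$ --- which also means you need witness suffixes from Lemma~\ref{usefulalg}(C) for \emph{every} $b$ in the equivalence class $\mathcal{E}_{j,k}$, not only for a base element $a_0$. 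Your phrase ``summing out $x_j$ via the definition of $t_j$ assembles the identity'' gestures at this, but without stating that the exponential sums are replaced by the closed form at a witness, the algorithm as described is not polynomial time, and the circularity (defining $t_j$ through sums one cannot yet evaluate) is not broken. Finally, a small point: the evaluation of $Z(I)$ at the end uses a full tuple $\uu_a\in R$ with $u_{a,1}=a$ from Lemma~\ref{usefulalg}(A), not the class witnesses of part (C).
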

\begin{proof}
Assume $(D,\calF)$ is not \#P-hard.
By Corollary \ref{coro:trivial} and Theorem \ref{theo:rec},
  $\Gamma$ must be strongly rectangular. 
We prove the following lemma in Section \ref{app:hardness}, showing that $\calF$
  must be balanced and thus, weakly balanced:\vspace{0.06cm}

\begin{lemm}\label{lem:hard}
If $\calF$ is not balanced, then $(D,\calF)$ is \#P-hard.\vspace{0.06cm}
\end{lemm}

In the next two sections (Sections \ref{sec:vector} and \ref{sec:counting})
  we focus on the proof of the following algorithmic lemma:\vspace{0.06cm}

\begin{lemm}\label{lem:poly}
If $\Gamma$ is strongly rectangular and $\calF$ is weakly balanced,
  then $(D,\calF)$ is in polynomial time.\vspace{0.06cm}
\end{lemm}

The dichotomy theorem then follows directly.
\end{proof}

While the characterization of the dichotomy in Theorem \ref{maintheorem} above
  is very useful in the proof of its decidability, we can
  easily simplify it without using strong rectangularity.
We prove the following equivalent characterization
  using the notion of balance:

\begin{lemm}\label{lem:equiv}
$(D,\calF)$ is in polynomial time if $\calF$ is {balanced};
  and is \#P-hard otherwise.
\end{lemm}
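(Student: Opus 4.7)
The plan is to derive Lemma~\ref{lem:equiv} as a corollary of Theorem~\ref{maintheorem} and Lemma~\ref{lem:hard}. The \#P-hardness direction is immediate: if $\calF$ is not balanced, Lemma~\ref{lem:hard} already asserts that $(D,\calF)$ is \#P-hard. The substantive direction is that balance by itself implies both hypotheses of Theorem~\ref{maintheorem}, namely strong rectangularity of the associated $\Gamma$ and weak balance of $\calF$; once both are known, polynomial-time tractability of $(D,\calF)$ is automatic.

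Weak balance is literally the $b=a+1$ special case of Definition~\ref{defi:balance}, so that implication is instantaneous. The only real work is to deduce strong rectangularity of $\Gamma$ from balance of $\calF$, and I would do this by a support-level comparison. Given any input $I$ of $(D,\Gamma)$, let $I'$ be the corresponding input of $(D,\calF)$ obtained by replacing each relation $\Theta_i$ by its weighted counterpart $f_i$. Since every $f_i$ is non-negative and $\xx\in\Theta_i$ iff $f_i(\xx)>0$, we have $F_{I'}(\xx)>0$ exactly when $\xx\in R_I$. For any split $(\uu,\vv,\ww)$ with $1\le a<b\le n$, the counting matrix
$$M(\uu,\vv)=\bigl|\{\ww\in D^{n-b}:(\uu,\vv,\ww)\in R_I\}\bigr|$$
and the weighted-sum matrix
$$M'(\uu,\vv)=\sum_{\ww\in D^{n-b}} F_{I'}(\uu,\vv,\ww)$$
therefore have identical support. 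Balance of $\calF$ forces $M'$ to be block-rank-$1$; in particular its support, and hence that of $M$, is rectangular. This is precisely strong rectangularity of $\Gamma$, and Theorem~\ref{maintheorem} then supplies the polynomial-time algorithm for $(D,\calF)$.

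The main ``obstacle'' here is really only a conceptual one: rectangularity depends only on the zero/non-zero pattern of a matrix, so the block-rank-$1$ structure of the weighted sum $M'$ transfers without loss to the $0/1$ counting matrix $M$. This is what makes balance a strictly stronger hypothesis that absorbs both strong rectangularity and weak balance, yielding the cleaner criterion of Lemma~\ref{lem:equiv}. All genuinely difficult steps have already been absorbed into Theorem~\ref{maintheorem} and Lemma~\ref{lem:hard}.
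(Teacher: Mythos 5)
Your proposal is correct and follows essentially the same route as the paper: hardness from Lemma~\ref{lem:hard}, and tractability by showing that balance subsumes both hypotheses of Theorem~\ref{maintheorem}, with strong rectangularity of $\Gamma$ obtained because block-rank-$1$ matrices are in particular rectangular and rectangularity only depends on the support, which is shared by the counting matrix and the weighted-sum matrix. Your write-up merely makes the instance correspondence $I \mapsto I'$ and the support argument more explicit than the paper's one-line justification.
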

\begin{proof}
Assume $(D,\calF)$ is not \#P-hard; otherwise we are already done.
By Lemma \ref{lem:hard}, we know $\calF$ must be balanced.
By Theorem \ref{maintheorem}, it suffices to show that if $\calF$ is balanced, then $\Gamma$
  is strongly rectangular, where we use $\Gamma$ to denote the unweighted
  constraint language that corresponds to $\calF$.
This follows directly from the definitions of strong rectangularity and
  balance, since a matrix that is block-rank-$1$ must first be rectangular.
\end{proof}

Next, we show that the complexity dichotomy is efficiently decidable.
Given $D$ and $\calF$,
  the decision problem of whether $(D,\calF)$ is in P or \#P-hard is
  actually in NP.
(Note that here $D$ and $\calF=\{f_1,\ldots,f_h\}$
  are considered no longer as constants, but as the input of the decision problem.
The input size is $d$ plus the number of bits needed to
  describe $f_1,\ldots,f_h$.)
We prove the following theorem in Section \ref{sec:dec}.
The proof follows the approach of Dyer and Richerby \cite{Dyer-Rich3}, with new
  ideas and constructions developed for the more general weighted case.
It uses a method of Lov\'asz \cite{Lovasz}, which was also used in \cite{DirectedHomo}.\vspace{0.03cm}\newpage

\begin{theo}\label{theo:dec}
Given $D$ and $\calF$,
  the problem of deciding whether $(D,\calF)$ is in P or \#P-hard is in \emph{NP}.
\end{theo}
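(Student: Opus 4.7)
The plan is to follow the approach of Dyer and Richerby \cite{Dyer-Rich3}, adapted from the unweighted to the weighted setting, and to use the method of Lov\'asz \cite{Lovasz} in the spirit of \cite{DirectedHomo} to bound the size of the structural witnesses. The central claim I would establish is that if $\calF$ is \emph{not} balanced, then there exists an instance $I$ of $(D,\calF)$ whose size is polynomially bounded in the description length of $(D,\calF)$, together with a partition parameter $(a,b)$ of its variables and four index tuples $\uu_1,\uu_2\in D^a$ and $\vv_1,\vv_2\in D^{b-a}$, such that the $2\times 2$ submatrix of the matrix $M$ from Definition~\ref{defi:balance} indexed by these tuples witnesses the conclusion of Lemma~\ref{trivial2} (two rows that are neither linearly dependent nor orthogonal). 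The NP verifier nondeterministically guesses $I$, the partition $(a,b)$, and the four index tuples, and then checks in polynomial time, using exact arithmetic on the algebraic entries of $\calF$, that the resulting $2\times 2$ numerical submatrix indeed witnesses non-block-rank-$1$.

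The first reduction is to the case where the unweighted shadow $\Gamma$ is already strongly rectangular. If it is not, then \cite{Dyer-Rich3} supplies a polynomial-size instance of $(D,\Gamma)$ witnessing non-strong-rectangularity, and Lemma~\ref{lem:simple} lifts this to a small witness for $(D,\calF)$ that in particular falsifies balance. When $\Gamma$ is strongly rectangular, Lemma~\ref{usefulalg} makes the block decomposition of the support of each matrix $M$ efficiently computable, so failure of balance must occur \emph{inside} a block, where all entries of $M$ are strictly positive. Two positive rows that are linearly independent automatically have strictly positive inner product and so are non-orthogonal; hence Lemma~\ref{trivial2} gives the required $2\times 2$ witness in terms of just four algebraic values of the form $\sum_{\ww}F(\uu_i,\vv_j,\ww)$.

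The main obstacle is to bound the size of the instance $I$ needed to simultaneously realize these four pp-definable weighted quantities. In the unweighted case \cite{Dyer-Rich3} one exploits the finiteness of pp-definable relations of bounded arity to obtain a polynomial bound; in the weighted setting the analogous family of pp-definable real-valued functions of bounded arity is uncountable. The plan is to invoke Lov\'asz's method \cite{Lovasz,DirectedHomo}: associate to each pp-definable non-negative function a canonical normalized signature, built from the vector-representation and frame machinery developed for the proof of Lemma~\ref{lem:poly} together with the equivalence-class data of Lemma~\ref{usefulalg}, and then show that two such signatures which differ can already be distinguished on a test gadget of polynomial size. Once this small-distinguishability lemma is in hand, standard gadget concatenation assembles the witnessing instance $I$ in polynomial size, and the NP algorithm of the first paragraph finishes the proof. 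The hardest technical step will be proving that the normalized signatures behave well under the substitutions and concatenations used to build the witness, since the balance condition involves delicate multiplicative relations among the algebraic entries of $F$; handling these cancellations is the main novelty that the weighted setting forces on top of the arguments of \cite{Dyer-Rich3}.
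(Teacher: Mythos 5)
Your plan has two gaps, each of which is fatal on its own. First, your verifier cannot do the check you assign to it. The certificate you propose is an instance $I$ together with $(a,b)$ and four index tuples, and you ask the verifier to confirm that the corresponding $2\times 2$ submatrix of $\MM$ is a witness against block-rank-$1$. But each entry $M(\uu_i,\vv_j)=\sum_{\ww\in D^{n-b}}F_I(\uu_i,\vv_j,\ww)$ is a sum over exponentially many assignments --- it is essentially the partition function of $I$ with some variables pinned --- so it is not computable in polynomial time by an NP verifier (if it were, the whole counting problem would be easy). The same objection applies to your preliminary step: verifying that a given instance of $(D,\Gamma)$ witnesses non-rectangularity already requires deciding, for exponentially many $\ww$, whether $(\uu,\vv,\ww)\in R_I$. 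Second, the ``small counterexample'' lemma on which everything rests --- that failure of balance is always witnessed by an instance of size polynomial in the description of $(D,\calF)$ --- is exactly the hard content, and you only gesture at it; neither the paper nor Dyer--Richerby proves any such bound, and the instances that arise naturally in the paper's own argument (the instance built from the exponents $N_1,\ldots,N_{|L|}$ of Lemma~\ref{technical}) have exponentially many constraint tuples. There is also a polarity mismatch: your certificate would witness \emph{hardness}, placing the tractability criterion in coNP, whereas the theorem (and the paper's proof) certify the \emph{tractable} side in NP.

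The paper's route avoids both problems by never evaluating any marginal matrix of a general instance. It first reduces weak balance to \emph{primitive balance} (only the first two coordinates matter), via a doubling construction realizing $\MM^{\mathrm T}\MM$; then, using Lemma~\ref{lemrank1}, balance for all instances becomes a degree-6 multiplicative identity among entries of $\MM_I$, which is re-encoded over the 6th-power structure $(\frak D,\frak F)$ as ``$\mathrm{hom}_{\frak b}(\frak I)=\mathrm{hom}_{\frak c}(\frak I)$ for all $\frak I$.'' This is where the Lov\'asz-style machinery actually enters: M\"obius inversion passes to injective sums, a Vandermonde argument over powered instances matches values, and the new weighted ingredient (Lemma~\ref{technical}, choosing exponents so that products of values are uniquely decodable) converts the universally quantified condition into the \emph{existence} of an automorphism $\pi$ of $(\frak D,\frak F)$ with $\pi(\frak a)=\frak a$ and $\pi(\frak b)=\frak c$. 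The NP certificate is then a Mal'tsev polymorphism for $\Gamma$ plus such automorphisms, all verifiable by inspecting the finitely many entries of the function tables --- no partition-function evaluation and no bound on counterexample instances is ever needed. If you want to salvage your approach, you would have to replace your verifier's numerical check by something of this finite, table-level kind; as written, the proposal does not yield an NP algorithm.
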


\section{Vector Representation}\label{sec:vector}

Assume $\calF$ is weakly balanced, and let $f$ be an
  $r$-ary function in $\calF$.
We use $\Theta$ to denote the corresponding $r$-ary relation of $f$ in $\Gamma$.
In this section, we show that there must exist $r$ non-negative one-variable functions
$
s_1,\ldots,s_r:$ $D\rightarrow \mathbb{R}_+,
$
such that for all $\xx\in D^r$, either $\xx\notin \Theta$ and $f(\xx)=0$; or we have
$
f(\xx)= s_1(x_1)\cdots s_r(x_r).
$
We call any $\ss =$ $(s_1,\ldots, s_r)$ that satisfies the property
  above a \emph{vector representation} of $f$.
We prove the following lemma:\vspace{0.03cm}
\begin{lemm}\label{lem:vector}
If $\calF$ is weakly balanced, then every function $f\in \calF$ has a vector representation.\vspace{0.03cm}
\end{lemm}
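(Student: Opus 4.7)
The plan is induction on the arity $r$ of $f$. The base case $r = 1$ is immediate: take $s_1 := f$. The case $\Theta = \emptyset$ (that is, $f \equiv 0$) is also immediate, via $s_i \equiv 0$; so assume $\Theta \neq \emptyset$ throughout.

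For the inductive step $r \ge 2$, I would first factor $f$ along its first argument via a two-copy weak-balance argument. Consider the instance $I$ of $(D, \calF)$ on variables $x_1, y_1, x_2, \ldots, x_r$ with the two constraints $f(x_1, x_2, \ldots, x_r)$ and $f(y_1, x_2, \ldots, x_r)$, so that the tail $x_2, \ldots, x_r$ is shared while the heads $x_1, y_1$ are distinct. Applying Definition \ref{def:sb} with $a = 1$ (rows indexed by $x_1$, column by $y_1$, with the tail summed out), the resulting $d \times d$ Gram matrix
$$\MM(x_1, y_1) = \sum_{\xx' \in D^{r-1}} f(x_1, \xx') \cdot f(y_1, \xx')$$
must be block-rank-$1$. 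Since $\MM$ is symmetric and non-negative, each rank-$1$ block corresponds to Cauchy--Schwarz equality among the corresponding slices $f(a,\cdot)$, forcing pairwise proportionality within each block $C \subseteq D$; meanwhile the zeros between blocks force $f(a,\cdot)$ and $f(b,\cdot)$ to have disjoint supports whenever $a$ and $b$ lie in different blocks. This yields a factorization $f(x_1, \xx') = \lambda_C(x_1) \cdot \phi_C(\xx')$ for $x_1 \in C$, in which the $(r-1)$-ary functions $\phi_C$ have pairwise disjoint supports.

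Next I would apply the inductive hypothesis to each $\phi_C$. The main obstacle is that $\phi_C$ is not itself a member of $\calF$, so I must verify that $\phi_C$ inherits weak balance at arity $r - 1$. The idea is to simulate any instance built from $\calF \cup \{\phi_C\}$ by an instance of $(D, \calF)$ that replaces each occurrence of $\phi_C(\cdot)$ by $f(z, \cdot)$, where $z$ is a fresh variable summed over $D$; the disjoint-support property of the $\phi_{C'}$'s then ensures that, on the portion of the tail-support relevant to $\phi_C$, only the $C$-component survives the sum, up to a positive multiplicative constant $\Lambda_C := \sum_{a \in C} \lambda_C(a)$. The block-rank-$1$ conditions that weak balance of $\calF$ imposes on this simulating instance then transfer directly to the analogous matrices for $\phi_C$. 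Invoking the inductive hypothesis on $\phi_C$ produces one-variable functions $s_2^{(C)}, \ldots, s_r^{(C)} : D \to \mathbb{R}_+$ with $\phi_C(\xx') = \prod_{j \ge 2} s_j^{(C)}(x_j)$ on the support of $\phi_C$.

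Finally I would assemble the global $s_i$'s by setting $s_1(x_1) := \lambda_C(x_1)$ when $x_1 \in C$ (extended by zero off $\pi_1(\Theta)$) and $s_j(a) := s_j^{(C)}(a)$ for $j \ge 2$ and $a$ in the $j$-th projection of the support of $\phi_C$. Within a fixed $C$, this gives $f(\xx) = s_1(x_1) \prod_{j \ge 2} s_j(x_j)$ on the $C$-slice of $\Theta$ directly. Where the $j$-th projections of distinct supports $\mathrm{supp}(\phi_C)$ and $\mathrm{supp}(\phi_{C'})$ overlap at some $a \in D$, consistency of $s_j(a)$ is forced by applying weak balance to the single-constraint instance $\{f\}$ after relabelling so that $x_j$ occupies the column role; the block-rank-$1$ conclusion enforces exactly the cross-block proportionality needed, and the per-block scaling freedom inherent in vector representations lets me renormalize each $s_j^{(C)}$ to coincide at such overlaps, producing the required vector representation of $f$.
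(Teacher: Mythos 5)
Your reduction to the Gram matrix $M(x_1,y_1)=\sum_{\xx'}f(x_1,\xx')f(y_1,\xx')$ is a legitimate use of weak balance, and the resulting slice decomposition $f(x_1,\xx')=\lambda_C(x_1)\phi_C(\xx')$ with disjointly supported $\phi_C$ is correct. The gap is in the step where you give $\phi_C$ to the inductive hypothesis. Your simulation replaces a $\phi_C$-constraint by $f(z,\cdot)$ with a fresh summed variable $z$, but summing $z$ produces the first-variable marginal $g(\xx')=\sum_{z}f(z,\xx')=\sum_{C'}\Lambda_{C'}\phi_{C'}(\xx')$, not $\Lambda_C\phi_C$. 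The matrices that weak balance controls for the simulating instance have entries that are sums over many assignments of the remaining variables, including assignments lying in $\mathrm{supp}(\phi_{C'})$ for $C'\neq C$; disjointness of the supports in $D^{r-1}$ does not keep these contributions in separate matrix entries, because the projections of the supports can overlap. So the relevant matrix is a sum $\sum_{C'}\Lambda_{C'}\MM_{\phi_{C'}}$, and block-rank-$1$ of a sum of non-negative matrices does not pass to its summands; your pointwise remark that ``only the $C$-component survives'' applies to individual function values, not to these summed entries. Hence the claim that the block-rank-$1$ conditions ``transfer directly to the analogous matrices for $\phi_C$'' is unjustified. The same difficulty resurfaces in your last step: after obtaining separate representations $s_j^{(C)}$ per block, you assert that the per-block scaling freedom can be used to make them agree at all overlapping projections simultaneously for all $j\ge 2$; this global compatibility is exactly the content of the lemma and is not proved by one application of weak balance to the single-constraint instance.

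Both problems disappear if you do not descend to the slices $\phi_C$ at all but to a marginal of $f$, which is what the paper does. The paper's induction is keyed to the hypothesis ``$f^{[\ell]}$ is block-rank-$1$ for every $\ell\in[r]$,'' where $f^{[\ell]}$ sums out the last $r-\ell$ variables; this is precisely what weak balance gives for the one-constraint instance, and it is inherited by the marginal $f^{[r-1]}$ for free since $(f^{[r-1]})^{[\ell]}=f^{[\ell]}$. By induction $f^{[r-1]}$ has a representation $(s_1,\ldots,s_{r-1})$, and the last factor $s_r$ is built from the block structure of $f$ viewed as a $d^{r-1}\times d$ matrix, using Lemma \ref{trivial}. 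If you prefer your first-variable decomposition, the repair is analogous: apply the induction to $g=\sum_{x_1}f(x_1,\cdot)$ itself (its prefix marginals are block-rank-$1$ by weak balance of the same instance with $x_1$ relabelled into the summed suffix); since $\phi_C=g/\Lambda_C$ on $\mathrm{supp}(\phi_C)$, a single representation of $g$ serves all blocks at once, and setting $s_1(x_1)=\lambda_C(x_1)/\Lambda_C$ for $x_1\in C$ finishes the argument with no consistency issue.
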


To this end we need the following notation.
Let $f$ be any $r$-ary function over $D$.
Then for any $\ell\in [r]$, we use $f^{[\ell]}$ to denote the following $\ell$-ary function over $D$:
$$
f^{[\ell]}(x_1,\ldots,x_\ell)\hspace{0.06cm}\stackrel{\text{def}}{=}
\sum_{x_{\ell+1},\ldots,x_r\in D} f(x_1,\ldots,x_\ell,x_{\ell+1},\ldots,x_r),
\ \ \ \ \ \ \text{for all $x_1,\ldots,x_\ell\in D$.}
$$
In particular, we have $f^{[r]} \equiv f$.

Let $f$ be an $r$-ary non-negative function with $r\ge 1$.
We say $f$ is \emph{block-rank-$1$} if either
  $r=1$; or
the following $d^{r-1}\times d$ matrix
  $\MM$ is block-rank-$1$: the rows of $\MM$ are indexed by $\uu\in D^{r-1}$ and
  the columns are indexed by $v$ $\in D$, and
  $M(\uu,v)=f(\uu,v)$ {for all $\uu\in D^{r-1}$ and $v\in D$.}

By the definition of weak balance, Lemma \ref{lem:vector} is a direct
  corollary of the following lemma:\vspace{0.03cm}

\begin{lemm}
Let $f(x_1,\ldots,x_r)$ be an $r$-ary non-negative function.
If $f^{[\ell]}$ is block-rank-$1$ for all $\ell\in [r]$, then $f$
has a vector representation $\ss$.\vspace{0.03cm}
\end{lemm}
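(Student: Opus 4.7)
The natural approach is induction on the arity $r$. The base case $r=1$ is trivial: $f$ itself is block-rank-$1$ (by convention), so one just sets $s_1 = f$.

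For the inductive step, I would first apply the inductive hypothesis to $f^{[r-1]}$. Note that $(f^{[r-1]})^{[\ell]} = f^{[\ell]}$ for every $\ell \le r-1$, so $f^{[r-1]}$ satisfies the inductive assumption and therefore admits a vector representation $\tilde s_1,\ldots,\tilde s_{r-1}$. Next I would exploit the block-rank-$1$ structure of $f = f^{[r]}$ itself: view $f$ as a $d^{r-1}\times d$ matrix $\MM$ with $M(\uu,v) = f(\uu,v)$, and let $A_1\times B_1,\ldots,A_s\times B_s$ be its blocks. On each block, rank-$1$ gives a factorization $f(\uu,v) = p_k(\uu)\,q_k(v)$ for $\uu\in A_k$, $v\in B_k$, with $p_k,q_k$ strictly positive on $A_k,B_k$.

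The key step is to reconcile these two structures. Summing the block factorization over $v \in B_k$ yields $f^{[r-1]}(\uu) = p_k(\uu)\,Q_k$ for $\uu \in A_k$, where $Q_k = \sum_{v\in B_k} q_k(v) > 0$. Combined with $f^{[r-1]}(\uu) = \tilde s_1(u_1)\cdots \tilde s_{r-1}(u_{r-1})$ (which is automatic whenever some $v$ makes $f(\uu,v)>0$), this gives $p_k(\uu) = \tilde s_1(u_1)\cdots \tilde s_{r-1}(u_{r-1})/Q_k$. I would then define $s_i := \tilde s_i$ for $i<r$, and set $s_r(v) := q_k(v)/Q_k$ if $v\in B_k$ and $s_r(v):=0$ otherwise; this is well-defined precisely because the $B_k$ are pairwise disjoint. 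A short verification then shows that whenever $f(\uu,v)>0$, the pair $(\uu,v)$ lies in a unique block $A_k\times B_k$ and
\[
f(\uu,v) \;=\; p_k(\uu)\,q_k(v) \;=\; \tilde s_1(u_1)\cdots \tilde s_{r-1}(u_{r-1})\cdot s_r(v),
\]
which is the desired vector representation.

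The only subtle point I anticipate is bookkeeping: one must check that the $\tilde s_i$ inherited from $f^{[r-1]}$ really give the right product exactly on the support of $f(\cdot,v)$-summed indices, and that no inconsistency arises at coordinates where $\tilde s_i(u_i)=0$ but $f(\uu,v)>0$ for some $v$. The disjointness of the $A_k$'s (from strong rectangularity implicit in block-rank-$1$) and the observation that $f(\uu,v)>0$ forces $f^{[r-1]}(\uu)>0$ rule out this pathology, so the construction goes through. No hard new idea is required beyond combining the inductive factorization of $f^{[r-1]}$ with the block-rank-$1$ structure of $f$.
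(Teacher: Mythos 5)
Your proposal is correct and follows essentially the same route as the paper: induct on $r$, use $(f^{[r-1]})^{[\ell]}=f^{[\ell]}$ to apply the hypothesis to $f^{[r-1]}$, and build $s_r$ from the block-rank-$1$ structure of $f$ viewed as a $d^{r-1}\times d$ matrix. Your explicit rank-$1$ factorization $f(\uu,v)=p_k(\uu)q_k(v)$ with $s_r(v)=q_k(v)/Q_k$ is just a reparametrization of the paper's choice $s_r(v)=M(\uu_i,v)/\sum_{v'\in B_i}M(\uu_i,v')$ made via a representative row and its ratio lemma, and your handling of the zero-value bookkeeping matches the paper's.
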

\begin{proof}
We prove the lemma by induction on $r$, the arity of $f$.

The base case when $r=1$ is trivial.
Now assume for induction that the claim is true for all $(r-1)$-ary
  non-negative functions, for some $r\ge 2$.
Let $f$ be an $r$-ary non-negative function such that $f^{[\ell]}$ is block-rank-$1$
  for all $\ell\in [r]$.
By definition, it is easy to see that
$$
\left(f^{[r-1]}\right)^{[\ell]} = f^{[\ell]},\ \ \ \ \ \text{for all $\ell\in [r-1]$.}
$$
As a result, if we denote $f^{[r-1]}$, an $(r-1)$-ary non-negative function, by $g$,
  then $g^{[\ell]}$ is block-rank-$1$ for every $\ell \in$ $ [r-1]$.
Therefore, by the inductive hypothesis, $g=f^{[r-1]}$ has
  a vector representation $(s_1,\ldots,s_{r-1})$.

Finally, we show how to construct $s_r$ so that $(s_1,\ldots,s_{r-1},s_r)$
  is a vector representation of $f$.
To this end, we let $\MM$ denote the following $d^{r-1}\times d$ matrix:
  The rows are indexed by $\uu\in D^{r-1}$ and the columns are indexed by $v\in D$,
  and $M(\uu,v)=f(\uu,v)$ for all $\uu\in D^{r-1}$ and $v\in D$.
By the assumption we know that $\MM$ is block-rank-$1$.
Therefore, by definition, there exist pairwise disjoint and nonempty subsets of $D^{r-1}$, denoted by
  $A_1,\ldots,A_s$, and pairwise disjoint and nonempty subsets of $D$, denoted by $B_1,\ldots,B_s$,
  for some $s\ge 0$, such that
$M(\uu,v)>0$ if, and only if $\uu\in A_i$ and $v\in B_i$ for some $i\in [s]$; and
for every $i\in [s]$, the $A_i\times B_i$ sub-matrix of $\MM$ is of rank $1$.

We now construct $s_r:D\rightarrow \mathbb{R}_+$ as follows.
For every $i\in [s]$, we arbitrarily pick a vector from $A_i$ and denote it $\uu_i$.
Then for $v\in D$, we set $s_r(v)$ as follows: \vspace{0.06cm}
\begin{enumerate}
\item If $v\notin B_i$ for any $i\in [s]$, then $s_r(v)=0$; and\vspace{-0.15cm}
\item Otherwise, assume $v\in B_i$. Then
\begin{equation}\label{eqpp}
s_r(v)=\frac{M(\uu_i,v)}{\sum_{v'\in B_i} M(\uu_i,v')}.
\end{equation}
\end{enumerate}

To prove that $(s_1,\ldots,s_r)$ is actually a vector representation of $f$, we only need
  to show that for every tuple $(\uu,v)$ such
  that $\uu\in A_i$ and $v\in B_i$ for some $i\in [s]$ (since otherwise we have
  $f(\uu,v)=0$), we have
$$
f(\uu,v)=M(\uu,v)=s_r(v)\prod_{j\in [r-1]} s_j(u_j).
$$
By using Lemma \ref{trivial} and (\ref{eqpp}), we have
$$
M(\uu,v)=M(\uu_i,v)\cdot \frac{\sum_{v'\in B_i}M(\uu,v')}{\sum_{v'\in B_i} M(\uu_i,v')}=
s_r(v)\cdot f^{[r-1]}(\uu)=s_r(v)\prod_{j\in [r-1]} s_j(u_j),
$$
where the last equation above follows from the inductive hypothesis
  that $(s_1,\ldots,s_{r-1})$ is a vector representation of $g=f^{[r-1]}$.
This finishes the induction, and the lemma is proved.
\end{proof}

\section{Tractability: The Counting Algorithm}\label{sec:counting}

In this section, we prove Lemma \ref{lem:poly} by giving a polynomial-time algorithm
  for the problem $(D,\calF)$, assuming $\Gamma$ is strongly rectangular and $\calF$ is weakly balanced.
As mentioned earlier, because $\Gamma$ is strongly rectangular
  we can use the three polynomial-time algorithms described
  in Lemma \ref{usefulalg} as subroutines.
Also because $\calF$ is weakly balanced, we
  may assume, by Lemma \ref{lem:vector}, that every $r$-ary function $f$ in $\calF$
  has a vector representation $\ss_f=(s_{f,1},\ldots,s_{f,r})$,
  where $s_{f,i}:D\rightarrow \mathbb{R}_+$ for all $i\in [r]$.

Now let $I$ be an input instance of $(D,\calF)$ and let $F$
  denote the function it defines over $\xx=(x_1,\ldots,x_n)\in D^n$.
For each tuple in $I$, one can replace the first component, that is, a function $f$ in $\calF$,
  by its corresponding relation $\Theta$ in $\Gamma$.
We use $I'$ to denote the
  new set, which is clearly an input instance of $(D,\Gamma)$ and defines
  a relation $R$ over $\xx\in D^n$.
We have $F(\xx)>0$ if and only if $\xx\in R$, for all $\xx\in D^n$.

The first step of our algorithm is to
  construct a vector representation $\ss=(s_1,\ldots,s_n)$ of $F$,
  using the vector representations $\ss_f$ of $f$, $f\in \calF$:\vspace{0.06cm}

\begin{lemm}
Given $I$, one can compute $s_1(\cdot),\ldots,s_n(\cdot)$ in polynomial time
  such that for all $\xx\in D^n$, either $\xx\notin R$ and $F(\xx)=0$; or
$F(\xx)=s_1(x_1)\cdots s_n(x_n).$ \vspace{0.06cm}
\end{lemm}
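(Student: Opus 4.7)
The plan is to combine, variable by variable, the vector representations of the individual constraint functions appearing in $I$. By Lemma~\ref{lem:vector}, since $\calF$ is weakly balanced, every $f\in\calF$ has a vector representation $\ss_f=(s_{f,1},\ldots,s_{f,r})$; because $\calF$ is a constant (a fixed finite set of functions on the finite domain $D$), each such $\ss_f$ is a constant that can be looked up in constant time. For each tuple $(f,i_1,\ldots,i_r)\in I$ and each position $j\in[r]$, the factor $s_{f,j}$ naturally attaches to the variable $x_{i_j}$. The obvious candidate is therefore
$$
s_i(a)\ \stackrel{\text{def}}{=}\prod_{(f,i_1,\ldots,i_r)\in I}\ \prod_{j\in [r]:\, i_j=i} s_{f,j}(a),\qquad \text{for all }i\in[n]\text{ and }a\in D,
$$
where an empty product equals $1$. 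Since $F$ has $O(|I|)$ local factors in total, each $s_i$ takes polynomial time to write down as a $d$-entry table.

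Next I would verify this in the only case where anything is claimed, namely $\xx\in R$. For such an $\xx$ every constraint of $I'$ is satisfied, so in particular $(x_{i_1},\ldots,x_{i_r})\in\Theta$ for each tuple of $I$. The defining property of the vector representation of $f$ then gives $f(x_{i_1},\ldots,x_{i_r})=\prod_{j\in[r]} s_{f,j}(x_{i_j})$. Multiplying these equalities over all tuples of $I$ and regrouping the factors by the variable index on which they act yields
$$
F(\xx)=\prod_{(f,i_1,\ldots,i_r)\in I}\prod_{j\in[r]} s_{f,j}(x_{i_j})=\prod_{i=1}^n\ \prod_{(f,i_1,\ldots,i_r)\in I}\ \prod_{j\in[r]:\,i_j=i} s_{f,j}(x_i)=\prod_{i=1}^n s_i(x_i),
$$
which is the second alternative of the lemma. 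For $\xx\notin R$ the first alternative is automatic, since $F(\xx)=0$ by the definition of $R$, so nothing further is required.

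I expect essentially no obstacle beyond keeping track of supports. The main conceptual subtlety, and the reason this lemma is genuinely only a first step, is that outside $R$ the product $\prod_i s_i(x_i)$ may well be strictly positive — these are the ``holes'' foreshadowed in the introduction — but the statement of the lemma asks only for the factored identity to hold where $F$ is positive, not for the product to vanish off the support. Eliminating those holes, so that one can actually sum $F$ over $\xx$, is precisely the task of the next stage (the construction of $t_2,\ldots,t_n$ and the verification of equation~(\ref{introeq})); for the present lemma the bookkeeping above is the whole story.
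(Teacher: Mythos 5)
Your proposal is correct and matches the paper's proof: the paper builds the same functions by initializing each $s_i\equiv 1$ and iteratively multiplying in the factor $s_{f,j}$ for every tuple $(f,i_1,\ldots,i_r)\in I$ with $i_j=i$, which is exactly your closed-form product, and the verification on $R$ is the same regrouping argument. No issues.
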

\begin{proof}
We start with $s_1,\ldots,s_n$ where $s_i(a)=1$ for all $i\in [n]$ and $a\in D$.
We then enumerate the tuples in $I$ one by one.
For each $(f,i_1,\ldots,i_r)\in I$ and each $j\in [r]$,
  we update the function $s_{i_j}(\cdot)$ using $s_{f,j}(\cdot)$ as follows:\vspace{-0.06cm}
$$
s_{i_j}(a) \stackrel{\text{set}}{=} s_{i_j}(a)\cdot s_{f,j}(a),\ \ \ \ \ \text{for every $a\in D$.}
$$
It is easy to check that the tuple $(s_1,\ldots,s_n)$ we get is a vector representation of $F$.
\end{proof}\newpage

The second step of the algorithm is to construct a sequence of one-variable functions
  $t_n(\cdot),t_{n-1}(\cdot),\ldots,t_2(\cdot)$ that
  have the following nice property:
for any $i\in \{1,\ldots,n-1\}$ and for any $\uu\in R$, we have \vspace{0.05cm}
\begin{equation}\label{imp}
\sum_{x_{i+1},\ldots,x_n\in D} F(u_1,\ldots,u_i,x_{i+1},\ldots,x_{n})=
s_1(u_1)\cdots s_i(u_i)\cdot \frac{s_{i+1}(u_{i+1})}{t_{i+1}(u_{i+1})} \cdots
  \frac{s_n(u_n)}{t_n(u_n)} .
\end{equation}
Before giving the construction and proving (\ref{imp}), we show that $Z(I)$ is easy to compute once
  we have $t_n,\ldots,t_2$.

For this purpose, we first compute ${\sf pr}_1 R$ in polynomial time using
  the algorithm in Lemma \ref{usefulalg} (A).
In addition, we find a vector $\uu_a=(u_{a,1},u_{a,2},\ldots,u_{a,n})\in R$
  for each $a\in {\sf pr}_1 R$ such that $u_{a,1}=a$ in polynomial time.
Then\vspace{0.03cm}
\begin{equation*}
Z(I)=\sum_{\xx\in D^n} F(\xx)
=\sum_{a\in {\sf pr}_1 R}\hspace{0.1cm} \sum_{x_2,\ldots,x_n\in D} F(a,x_2,\ldots,x_n)
=\sum_{a\in {\sf pr}_1 R}\hspace{0.1cm} s_1(a)\prod_{j\in [2:n]}
\left(\frac{s_j(u_{a,j})}{t_j(u_{a,j})}\right),\vspace{0.07cm}
\end{equation*}
which clearly can be evaluated in polynomial time using $s_1,\ldots,s_n$ and $t_2,\ldots,t_n$.

Now we construct $t_n,t_{n-1},\ldots,t_2$ and prove (\ref{imp})
  by induction. We start with $t_n(\cdot)$.

Because $\calF$ is weakly balanced, the following $d^{n-1}\times d$ matrix $\MM$ must be block-rank-$1$:
the rows are indexed by $\uu \in D^{n-1}$ and
  the columns are indexed by $v\in D$, and $M(\uu,v)=F(\uu,v)$
  for all $\uu\in D^{n-1}$ and $v\in D$.
By the definition of $\sim_n$, we have $v_1\sim_n v_2$ if and only if
  columns $v_1$ and $v_2$ are in the same block of $\MM$ and thus,
  the equivalent classes $\{\mathcal{E}_{n,k}\}$ are exactly the column index sets
  of those blocks of $\MM$.

We define $t_n(\cdot)$ as follows. For every $a\in D$, if $a\notin {\sf pr}_n R$ then $t_n(a)=0$;
Otherwise, $a$ belongs to one of the equivalence classes $\mathcal{E}_{n,k}$
  of $\sim_n$ and\vspace{0.03cm}
\begin{equation}\label{needed}
t_n(a)=\frac{s_n(a)}{\sum_{b\in \mathcal{E}_{n,k}} s_n(b)}.\vspace{0.07cm}
\end{equation}
By using the algorithm in Lemma \ref{usefulalg} (B) $t_n(\cdot)$ can
  be constructed efficiently.
We now prove (\ref{imp}) for $i=n-1$.
Given any $\uu\in R$, we have $u_n\in {\sf pr}_n R$ by definition and let
  $\mathcal{E}_{n,k}$ denote the equivalence class that $u_n$ belongs to.
Then\vspace{0.08cm}
$$
\sum_{b\in D} F(u_1,\ldots,u_{n-1},b)=
\sum_{b\in \mathcal{E}_{n,k}} F(u_1,\ldots,u_{n-1},b)
=\prod_{j\in [n-1]} s_j(u_j) \sum_{b\in \mathcal{E}_{n,k}} s_n(b)
=\prod_{j\in [n-1]} s_j(u_j) \cdot \frac{s_n(u_n)}{t_n(u_n)}.\vspace{0.12cm}
$$
The last equation follows from the construction
  (\ref{needed}) of $t_n(\cdot)$ and the assumption that $u_n\in \mathcal{E}_{n,k}$.

Now assume for induction that we already constructed $t_{i+1},\ldots,t_n$, for some
  $i\in [2:n-1]$, and they satisfy (\ref{imp}).
To construct $t_{i}(\cdot)$, we first observe that the following
  $d^{i-1}\times d$ matrix $\MM$ must be
  block-rank-$1$, because $\calF$ is weakly balanced:
the rows are indexed by $\uu=(u_1,\ldots,u_{i-1})\in D^{i-1}$ and the columns are
  indexed by $v\in D$,
$$
M(\uu,v) = \sum_{\ww\in D^{n-i}} F(\uu,v,\ww)=\sum_{(\uu,v,\ww)\in R} F(\uu,v,\ww).
$$
Similarly, by the definition of $\sim_i$, its equivalent classes $\{\mathcal{E}_{i,k}\}$
  are precisely the column index sets of those blocks of $\MM$.
By (\ref{imp}) and the inductive hypothesis we immediately have the following concise form for
  $M(\uu,v)$: for any $\ww=(w_{i+1},\ldots,w_n)\in D^{n-i}$ such that $(\uu,v,\ww)\in R$,
  we have\vspace{0.01cm}
\begin{equation}\label{concise}
M(\uu,v)=\left(\prod_{j\in [i-1]} s_j(u_j)\right)s_i(v)\left(
\prod_{j\in [i+1:n]}\frac{s_{j}(w_j)}{t_j(w_j)}\right).\vspace{0.12cm}
\end{equation}
Note that by (\ref{imp}), the choice of $\ww$ can be arbitrary as long as
  $(\uu,v,\ww)\in R$.

We now construct $t_i(\cdot)$. For every $a\in D$,\vspace{0.06cm}
\begin{flushleft}
\begin{enumerate}
\item If $a\notin {\sf pr}_i R$, then $t_i(a)=0$; and\vspace{-0.06cm}
\item Otherwise, let $\mathcal{E}_{i,k}$ denote the equivalence class
  of $\sim_i$ that $a$ belongs to.
Then by using the algorithm in Lemma \ref{usefulalg} (C), we find
  a tuple $\uu^{[i,k]}\in D^{i-1}$ and a tuple $\vv^{[i,k,b]}\in D^{n-i}$ for
  each $b\in \mathcal{E}_{i,k}$
  such that
$$
\big(\uu^{[i,k]},b,\vv^{[i,k,b]}\big)\in R,\ \ \ \ \ \text{for all $b\in \mathcal{E}_{i,k}$.}
$$
Then we set\vspace{0.12cm}
\begin{equation}\label{lalala}
t_i(a)=\frac{M(\uu^{[i,k]},a)}{\sum_{b\in \mathcal{E}_{i,k}} M(\uu^{[i,k]},b)}.\vspace{0.15cm}
\end{equation}
By (\ref{concise}), $t_i(a)$ can be computed efficiently using tuples $\uu^{[i,k]}$ and $\vv^{[i,k,b]}$,
  for $b\in \mathcal{E}_{i,k}$.\vspace{0.06cm}
\end{enumerate}
\end{flushleft}
This finishes the construction of $t_i(\cdot)$.

Finally we prove (\ref{imp}).
Let $\uu$ be any tuple in $R$ and $\mathcal{E}_{i,k}$ be
  the equivalence class of $\sim_i$ that $u_i$ belongs to. Then \vspace{0.06cm}
$$
\sum_{x_{i},\ldots,x_n\in D} F(u_1,\ldots,u_{i-1},x_i,\ldots,x_n)
=\sum_{b\in \mathcal{E}_{i,k}} \hspace{0.06cm}
\sum_{x_{i+1} ,\ldots,x_n\in D} F(u_1,\ldots,u_{i-1},b,x_{i+1},\ldots,x_n).
$$
Let $\uu^*$ denote the $(i-1)$-tuple $(u_1,\ldots,u_{i-1})$.
Then by the definition of $\MM$, we can rewrite the sum as\vspace{0.03cm}
$$
\sum_{x_{i},\ldots,x_n\in D} F(u_1,\ldots,u_{i-1},x_i,\ldots,x_n)
=\sum_{b\in \mathcal{E}_{i,k}} M(\uu^*,b).
$$
Recall the tuples $\uu^{[i,k]}$ and $\vv^{[i,k,b]}$, $b\in \mathcal{E}_{i,k}$,
  which we used in the construction of $t_i(\cdot)$.
Because $\MM$ is block-rank-$1$ and because $\uu^*$ and $\uu^{[i,k]}$
  are known to belong to the same block of $\MM$, we have\vspace{0.06cm}
$$
\sum_{b\in \mathcal{E}_{i,k}} M(\uu^*,b)
=\sum_{b\in \mathcal{E}_{i,k}} \frac{M(\uu^*,u_i)}{M(\uu^{[i,k]},u_i)}\cdot M(\uu^{[i,k]},b)
=\frac{M(\uu^*,u_i)}{M(\uu^{[i,k]},u_i)}\cdot\sum_{b\in \mathcal{E}_{i,k}} M(\uu^{[i,k]},b).
$$
However, by the definition (\ref{lalala}) of $t_i(\cdot)$, we have
$$
\sum_{b\in \mathcal{E}_{i,k}} M(\uu^{[i,k]},b)= \frac{M(\uu^{[i,k]},u_i)}{t_i(u_i)},
$$
since we assumed that $u_i\in \mathcal{E}_{i,k}$.
As a result, we have\vspace{0.06cm}
$$
\sum_{x_i,\ldots,x_n\in D} F(u_1,\ldots,u_{i-1},x_i,\ldots,x_n)
=\sum_{b\in \mathcal{E}_{i,k}} M(\uu^*,b)
=\frac{M(\uu^*,u_i)}{t_i(u_i)}
=\left(\prod_{j\in [i-1]} s_j(u_j)\right)
  \left(\prod_{j\in [i:n]} \frac{s_j(u_j)}{t_j(u_j)}\right).\vspace{0.18cm}
$$
The last equation follows from (\ref{concise}).
This finishes the construction of $t_n,\ldots,t_2$ and the proof of Lemma \ref{lem:poly}.

\section{Decidability of the Dichotomy}\label{sec:dec}

In this section, we prove Theorem \ref{theo:dec} by showing that
  the decision problem is in NP.

By Theorem \ref{maintheorem} we need to decide, given $D$ and $\calF$,
  whether $\Gamma$ is strongly rectangular and $\calF$ is weakly balanced or not.
The first part can be done in NP \cite{Bulatov,Dyer-Rich3} by exhaustively searching
  for a Mal'tsev polymorphism.

\begin{lemm}[\cite{Bulatov,Dyer-Rich3}]\label{lemm:gammanp}
Given $\Gamma$, deciding whether it is strongly rectangular is in \emph{NP}.
\end{lemm}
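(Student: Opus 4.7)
The plan is to reduce strong rectangularity to the existence of a ternary operation satisfying a short list of identities, and then to observe that such an operation serves as a polynomial-size NP witness.

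First, I would invoke the classical equivalence, recalled in the introduction of the paper, between strong rectangularity of $\Gamma$ and the existence of a \emph{Mal'tsev polymorphism} of $\Gamma$. Recall that a Mal'tsev polymorphism of $\Gamma$ is a ternary operation $\varphi:D^3\to D$ satisfying the two identities $\varphi(a,b,b)=a$ and $\varphi(b,b,a)=a$ for all $a,b\in D$, with the additional property that for every $\Theta\in\Gamma$ (of arity $r$) and every three tuples $\xx,\yy,\zz\in\Theta$, the componentwise application $\big(\varphi(x_1,y_1,z_1),\ldots,\varphi(x_r,y_r,z_r)\big)$ again lies in $\Theta$. That strong rectangularity and Mal'tsev polymorphism existence coincide is the content of the classical result cited from \cite{Bulatov,Dyer-Rich3} and summarised in Section 1; I would simply invoke it here.

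Next I would exhibit the NP procedure. The nondeterministic algorithm guesses a function $\varphi:D^3\to D$, given as a table of $d^3$ entries from $D$. Because the input contains $D$ together with explicit descriptions of the relations in $\Gamma$, and since any nontrivial relation already contributes at least $\Omega(d)$ bits to the input size, the table $\varphi$ has size $O(d^3\log d)$, which is polynomial in the input size. The verifier then does two things: (i) it checks the Mal'tsev identities by iterating over the $d^2$ pairs $(a,b)\in D^2$ and verifying $\varphi(a,b,b)=a$ and $\varphi(b,b,a)=a$; and (ii) for every relation $\Theta\in\Gamma$ it iterates over every triple $(\xx,\yy,\zz)\in\Theta^3$ and checks that the componentwise $\varphi$-image lies in $\Theta$. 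The number of such triples is at most $|\Theta|^3$, each membership test is polynomial in the arity, and summing over the (constantly many, or at worst $|\Gamma|$ many) relations gives a total running time polynomial in the input size.

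Both of these are straightforward, so the only delicate point is the equivalence between strong rectangularity and the existence of a Mal'tsev polymorphism; this is the substantive content and is precisely what we are importing from \cite{Bulatov,Dyer-Rich3}. Once that equivalence is accepted, the certificate $\varphi$ together with the linear-time verification above places the decision problem in NP, establishing the lemma.
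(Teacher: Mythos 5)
Your proposal is correct and matches the paper's approach: the paper likewise disposes of this lemma by citing the equivalence between strong rectangularity and the existence of a Mal'tsev polymorphism from \cite{Bulatov,Dyer-Rich3} and using the polymorphism table (of size $O(d^3\log d)$, polynomial since the input size is defined to include $d$) as the NP certificate, verified exactly as you describe. The only cosmetic remark is that you need not argue that a relation contributes $\Omega(d)$ bits---the paper's convention already counts $d$ itself in the input size---but this does not affect correctness.
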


\subsection{Primitive Balance}

Next we show the notion of weak balance is
  equivalent to the following even \emph{weaker} notion of \emph{primitive balance}:

\begin{defi}[Primitive Balance]
We say $\calF$ is \emph{primitively balanced} if for any instance $I$ of
  $(D,\calF)$ and the $n$-ary function $F_I(x_1,\ldots,x_n)$ it defines,
  the following $d\times d$ matrix $\MM_I$ is block-rank-$1$: The rows
  of $\MM_I$ are indexed by $x_1\in D$ and the columns are indexed by $x_2\in D$, and
\begin{equation}\label{eq:blockrank}
M_I(x_1,x_2)=\sum_{x_3,\ldots,x_n\in D} F_I(x_1,x_2,x_3,\ldots,x_n),
\ \ \ \ \ \text{for all $x_1,x_2\in D$.}
\end{equation}
\end{defi}

It is clear that weak balance implies primitive balance. The following lemma proves the inverse direction:

\begin{lemm}
If $\Gamma$ is primitively balanced , then it is also weakly balanced.
\end{lemm}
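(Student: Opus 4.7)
The plan is to prove the contrapositive via a short ``doubling'' reduction combined with a linear-algebra step showing that block-rank-$1$ is stable under forming $\MM^T\MM$ for non-negative $\MM$.

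Suppose $\calF$ is not weakly balanced. Then some instance $I$ of $(D,\calF)$ defining $F(x_1,\ldots,x_n)$, together with an integer $a$ with $1\le a<n$, witnesses this: the $d^a\times d$ matrix $\MM$ with $\MM(\uu,v)=\sum_{\ww\in D^{n-a-1}}F(\uu,v,\ww)$ fails to be block-rank-$1$. I will construct an instance $I'$ of $(D,\calF)$ whose primitive-balance matrix equals $\MM^T\MM$, and then show this matrix must also fail block-rank-$1$, contradicting primitive balance.

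The reduction takes two disjoint copies of $I$, on variable sets $\{x_1,\ldots,x_n\}$ and $\{x_1',\ldots,x_n'\}$, and identifies $x_i=x_i'$ for every $i\in[a]$. The result is a valid instance $I'$ of $(D,\calF)$ on $2n-a$ variables, defining the function $F'=F(x_1,\ldots,x_n)\cdot F(x_1,\ldots,x_a,x'_{a+1},\ldots,x'_n)$. After relabeling variables so that $x_{a+1}$ and $x'_{a+1}$ become the first two variables of $I'$, a direct calculation gives
\[M_{I'}(v,v')=\sum_{\uu\in D^a}\sum_{\ww,\ww'\in D^{n-a-1}}F(\uu,v,\ww)\,F(\uu,v',\ww')=(\MM^T\MM)(v,v').\]
Primitive balance of $\calF$ therefore forces $\MM^T\MM$ to be block-rank-$1$.

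The remaining ingredient is the linear-algebra claim: for any non-negative matrix $\MM$, if $\MM^T\MM$ is block-rank-$1$, then so is $\MM$. I prove the contrapositive. If $\MM$ is not block-rank-$1$, Lemma \ref{trivial2} supplies two rows $\rr_1,\rr_2$ that are neither linearly dependent nor orthogonal; non-negativity then forces them to be non-proportional nonzero vectors sharing a coordinate $v_0$ with $\rr_1(v_0),\rr_2(v_0)>0$. For every $v\in S:=\text{supp}(\rr_1)\cup\text{supp}(\rr_2)$ we have $(\MM^T\MM)(v_0,v)>0$, so $S$ lies inside a single block $B^*$ of $\MM^T\MM$. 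The principal submatrix $(\MM^T\MM)|_{S\times S}=\sum_{\uu}(\rr_\uu|_S)^T(\rr_\uu|_S)$ is a sum of PSD rank-$1$ outer products; because $\rr_1,\rr_2$ are both supported in $S$ and non-proportional, the two terms $(\rr_1|_S)^T(\rr_1|_S)$ and $(\rr_2|_S)^T(\rr_2|_S)$ alone already have rank $2$ (their combined range is the $2$-dimensional span of $\rr_1|_S$ and $\rr_2|_S$), and adding further PSD terms can only increase the rank. Hence the block of $\MM^T\MM$ containing $B^*$ has rank at least $2$, contradicting block-rank-$1$.

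The main subtle point is the last linear-algebra step, which genuinely uses non-negativity: for signed matrices, rank-$1$ outer products from distinct rows may cancel in the PSD sum and leave $\MM^T\MM$ block-rank-$1$ even when $\MM$ is not. Non-negativity provides both the ``shared-support implies same block'' structure that keeps $S$ in one block of $\MM^T\MM$, and the monotonicity of rank under adding PSD outer products, which together close the argument.
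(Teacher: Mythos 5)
Your proof is correct and follows essentially the same route as the paper: the same gadget that glues two copies of $I$ along the first $a$ variables (after reordering so the two distinguished variables come first) so that the primitive-balance matrix of the new instance equals $\MM^{\text{T}}\MM$, combined with Lemma \ref{trivial2}. The only difference is in presentation: where the paper simply invokes Lemma \ref{trivial2} (with the Cauchy--Schwarz-type computation spelled out in Section \ref{app:hardness}) to conclude that $\MM^{\text{T}}\MM$ is not block-rank-$1$, you justify this step by the support-connectivity observation plus the rank monotonicity of sums of PSD outer products, which is an equally valid argument.
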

\begin{proof}
Assume for a contradiction that $\Gamma$ is not weakly balanced.
By definition, this means there exist an $I$ over $n$-variables
  and an integer $a:1\le a<n$ such that the following $d^a\times d$ matrix
  $\MM$ is not block-rank-$1$: the rows of $\MM$ are indexed by $\uu\in D^a$ and
  the columns are indexed by $v\in D$, and
$$
M(\uu,v)=\sum_{\ww\in D^{n-a-1}} F_I(\uu,v,\ww),\ \ \ \ \ \text{for all $\uu\in D^a$ and $v\in D$.}
$$
As a result, we know by Lemma \ref{trivial2} that $\AA=\MM^{\text{T}}\MM$ is not
  block-rank-$1$.

To reach a contradiction, we construct $I'$ from $I$ as follows:
$I'$ has $2n-a$ variables in the following order:
$$
x_1,x_2,y_1,\ldots,y_a,z_1,\ldots, z_{n-a-1},w_1,\ldots,w_{n-a-1}.
$$
The instance $I'$ consists of two parts: a copy of $I$ over $(y_1,\ldots,y_a,x_1,z_1,\ldots,z_{n-a-1})$
  and a copy of $I$ over~$(y_1,\ldots,$ $y_a, x_2,w_1,\ldots,w_{n-a-1})$.
Let $F_{I'}$ denote the function that $I'$ defines.
It gives us the
  following $d\times d$ matrix $\MM_{I'}$:\vspace{0.06cm}
$$
M_{I'}(x_1,x_2)=\sum_{\yy\in D^a,\zz,\ww\in D^{n-a-1}}
  F_I(\yy,x_1,\zz)\cdot F_I(\yy,x_2,\ww)
=\sum_{\yy\in D^a} M(\yy,x_1)\cdot M(\yy,x_2)=A(x_1,x_2),
$$
which we know is not block-rank-$1$.
This contradicts with the assumption that $\calF$ is primitively balanced .
\end{proof}

Now the decision problem reduces to the following, and we call it
\begin{quote}
\textsc{primitive balance}: Given $D$ and $\calF$ such that
  $\Gamma$ is strongly rectangular (which\\ by Lemma \ref{lemm:gammanp} can be verified
  in NP), decide whether $\calF$ is primitively balanced.
\end{quote}\newpage

\noindent Since $\Gamma$ is strongly rectangular, we know that for any input $I$ of $(D,\calF)$,
  the $d\times d$ matrix $\MM_I$ defined in (\ref{eq:blockrank}) must be rectangular.
We need the following useful lemma from \cite{Dyer-Rich3}, which gives us
  a simple way to check whether a rectangular matrix is block-rank-$1$ or not.

\begin{lemm}[\cite{Dyer-Rich3}]\label{lemrank1}
A rectangular $d\times d$ matrix $\MM$ is block-rank-$1$ \emph{if and only if}
\begin{equation}\label{equivcond}
M({\alpha,\kappa})^2 M({\beta,\lambda})^2 M({\alpha,\lambda}) M({\beta,\kappa}) =
M({\alpha,\lambda})^2 M({\beta,\kappa})^2 M({\alpha,\kappa}) M({\beta,\lambda})
\end{equation}
{for all $\alpha\ne \beta\in D$ and $\kappa\ne \lambda\in D$.}
\end{lemm}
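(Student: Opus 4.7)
The plan is to prove the two directions separately, noting that the content of (\ref{equivcond}) is essentially a disguised form of the usual $2\times 2$ rank-one determinantal condition, packaged symmetrically so that zero entries cause both sides to vanish at once.

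For the forward direction I would first observe that each side of (\ref{equivcond}) is a degree-six monomial in which all four entries $M(\alpha,\kappa), M(\alpha,\lambda), M(\beta,\kappa), M(\beta,\lambda)$ appear with positive multiplicity (two of them squared, two of them linear). Consequently, if any one of these four entries is zero, both sides vanish and the identity holds trivially. If instead all four are strictly positive, then by rectangularity $\alpha$ and $\beta$ must belong to a common row block $A_k$ and $\kappa,\lambda$ to the matching column block $B_k$: a positive entry $M(i,j)$ forces the row-block index of $i$ to equal the column-block index of $j$, and the four joint constraints collapse everything into a single block. Applying the block-rank-$1$ hypothesis to the submatrix $\MM[A_k,B_k]$ yields $M(\alpha,\kappa) M(\beta,\lambda) = M(\alpha,\lambda) M(\beta,\kappa)$, and writing $p$ for this common value, both sides of (\ref{equivcond}) evaluate to $p^3$.

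For the converse, I would fix an arbitrary block $A_k\times B_k$ of the rectangular matrix $\MM$ and pick any $\alpha\ne\beta$ in $A_k$ and any $\kappa\ne\lambda$ in $B_k$. All four entries $M(\alpha,\kappa), M(\alpha,\lambda), M(\beta,\kappa), M(\beta,\lambda)$ are then strictly positive, so I can divide (\ref{equivcond}) through by the positive quantity $M(\alpha,\kappa) M(\alpha,\lambda) M(\beta,\kappa) M(\beta,\lambda)$ to recover the single determinantal identity $M(\alpha,\kappa) M(\beta,\lambda) = M(\alpha,\lambda) M(\beta,\kappa)$. Since $2\times 2$ minors with a repeated row or column are automatically zero, every $2\times 2$ minor of the positive submatrix $\MM[A_k,B_k]$ vanishes, so this submatrix has rank exactly one. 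Repeating this argument for every block shows that $\MM$ is block-rank-$1$.

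The only care required is the bookkeeping in the forward direction: one wants to confirm that when some entry is zero, that entry really does appear as a factor in both monomials of (\ref{equivcond}), which is immediate from inspection. Beyond this verification, the lemma reduces to the algebraic observation that for positive reals the identity $a^{2}d^{2}bc = b^{2}c^{2}ad$ is equivalent to $ad = bc$, so there is no genuine obstacle.
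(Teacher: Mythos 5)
Your proof is correct. Note that the paper itself gives no argument for this lemma---it is quoted directly from Dyer and Richerby \cite{Dyer-Rich3}---so there is no internal proof to compare against; your two-direction argument (both sides vanish whenever one of the four entries is zero because each entry divides both monomials, and otherwise rectangularity forces all four indices into one block where the rank-one condition gives $M(\alpha,\kappa)M(\beta,\lambda)=M(\alpha,\lambda)M(\beta,\kappa)$, making both sides $p^3$; conversely, within a block all entries are positive, so dividing (\ref{equivcond}) by $M(\alpha,\kappa)M(\alpha,\lambda)M(\beta,\kappa)M(\beta,\lambda)$ recovers the vanishing of every $2\times 2$ minor) is exactly the standard verification and is complete, including (vacuously) the case of blocks with a single row or column.
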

As a result, for \textsc{primitive balance} it suffices to check whether
  (\ref{equivcond}) holds for $\MM_I$, for all instances $I$ and for all $\alpha\ne \beta,\kappa\ne \lambda\in D$.
In the rest of this section, we fix $\alpha\ne \beta\in D$ and $\kappa\ne \lambda\in D$,
  and show that the decision problem (that is, whether (\ref{equivcond})
  holds for all $I$) is in NP.
Theorem \ref{theo:dec} then follows immediately
  since there are only polynomially many possible tuples
  $(\alpha,\beta,\kappa,\lambda)$ to check.

\subsection{Reformulation of the Decision Problem}

Fixing $\alpha\ne \beta\in D$ and $\kappa\ne \lambda\in D$,
  we follow \cite{Dyer-Rich3} and reformulate the decision problem
  using a new pair $(\frak D,\frak F)$, that is, the \emph{6-th power} of $(D,\calF)$:\vspace{0.1cm}
\begin{enumerate}
\item First, the new domain $\frak D=D^6$, and we use $\frak s=(s_1,\ldots,s_6)$
  to denote an element in $\frak D$, where $s_i\in D$.\vspace{-0.12cm}

\item Second, $\frak F=\{g_1,\ldots,g_h\}$ has the same number of functions as $\calF$ and
  every $g_i$, $i\in [h]$, has the same\\ arity $r_i$ as $f_i$.
Function $g_i:\frak D^{r_i}\rightarrow \mathbb{R}_+$ is constructed explicitly from $f_i$ as follows:
$$
g_i(\frak s_1,\ldots,\frak s_{r_i})=
\prod_{j\in [6]} f_i(s_{1,j},\ldots, s_{r_i,j}),\ \ \ \ \ \text{for all $\frak s_1,\ldots,
  \frak s_{r_i}\in \frak D=D^6$.}\vspace{-0.1cm}
$$
\end{enumerate}
In the rest of the section, we will always use $x_i$ to denote variables over $D$
  and $y_i,z_i$ to denote variables over $\frak D$.

Given any input instance $I$ of $(D,\calF)$ over $n$ variables $(x_1,\ldots,x_n)$,
  it naturally defines an
  input instance $\frak I$ of $(\frak D,\frak F)$ over $n$ variables $(y_1,\ldots,y_n)$ as
  follows: for each tuple $(f,i_1,\ldots,i_r)\in I$,
  add a tuple $(g, i_1,\ldots,i_r)$ to $\frak I$, where $g\in \frak F$
  corresponds to $f\in \calF$.
Moreover, this is clearly a bijection between
  the set of all $I$ and the set of all $\frak I$.
Similarly, we let $G:\frak D^n\rightarrow \mathbb{R}_+$ denote the $n$-ary
  function that $\frak I$ defines:
$$
G(y_1,\ldots,y_n)=\prod_{(g, i_1,\ldots,i_r)\in \frak I}
  g(y_{i_1},\ldots,y_{i_r}),
  \ \ \ \ \ \text{for all $y_1,\ldots,y_n\in \frak D.$}
$$
The reason why we introduce the new tuple $(\frak D,\frak F)$ is because
  it gives us a new and much simpler formulation of the decision problem we are interested.

To see this, we let $\frak a,\frak b,\frak c$ denote the following three specific
  elements from $\frak D$:
$$
\frak a=(\alpha,\alpha,\alpha,\beta,\beta,\beta),\ \ \ \
\frak b=(\kappa,\kappa,\lambda,\lambda,\lambda,\kappa),\ \ \ \
\frak c=(\lambda,\lambda,\kappa,\kappa,\kappa,\lambda).
$$
Since $\alpha\ne \beta$ and $\kappa\ne \lambda$, $\frak a,\frak b,\frak c$ are
  three distinct elements in $\frak D$.
We adopt the notation of \cite{Dyer-Rich3}. For each $\frak s\in \frak D$, let
$$
\text{hom}_{\frak s}(\frak I)\stackrel{\text{def}}{=}\sum_{y_3,\ldots,y_n\in \frak D} G(\frak a,\frak s,y_3,\ldots,y_n),
\ \ \ \ \ \text{for every instance $\frak I$ of $(\frak D,\frak F)$.}
$$
%
It is easy to prove the following two equations.
Let $\frak I$ be the instance of $(\frak D,\frak F)$ that corresponds to $I$,
  and $\MM_I$ be the $d\times d$ matrix as defined in (\ref{eq:blockrank}).
Then
\begin{eqnarray*}
&\text{hom}_{\frak b}(\frak I)=M_I({\alpha,\kappa})^2 M_I({\beta,\lambda})^2
  M_I({\alpha,\lambda}) M_I({\beta,\kappa}) &\ \text{and}\\[0.3ex]
&\text{hom}_{\frak c}(\frak I)=M_I({\alpha,\lambda})^2 M_I({\beta,\kappa})^2
  M_I({\alpha,\kappa}) M_I({\beta,\lambda})&
\end{eqnarray*}
As a result, we have the following reformulation of the decision problem:
$$
\text{$\MM_I$ satisfies (\ref{equivcond}) for all $I$\ \ \ $\Longleftrightarrow$\ \ \
  $\text{hom}_\frak b(\frak I)=\text{hom}_\frak c(\frak I)$ for all $\frak I$}
$$

The next reformulation considers sums over \emph{injective} tuples only.
We say $(y_1,\ldots,y_n)\in \frak D^n$ is an injective tuple if $y_i\ne y_j$ for all
  $i\ne j\in [n]$ (or equivalently, if we view $(y_1,\ldots,y_n)$ as a map from $[n]$ to $\frak D$,
  it is injective).
We use $Y_n$ to denote the set of injective $n$-tuples.
(Clearly this definition is only useful when $n\le |\frak D|$, otherwise $Y_n$ is empty.)
We now define functions $\text{mon}_{\frak s}(\frak I)$, which are sums over
  injective tuples: For each $\frak s\in \frak D$, let
$$
\text{mon}_{\frak s}(\frak I)\stackrel{\text{def}}{=}\sum_{(\frak a,\frak s,y_3,\ldots,y_n)\in Y_n}
G(\frak a,\frak s,y_3,\ldots,y_n),\ \ \ \ \ \text{for every instance $\frak I$ of $(\frak D,\frak F)$.}
$$

The following lemma shows that $\text{hom}_{\frak b}(\frak I)=
  \text{hom}_{\frak c}(\frak I)$ for all $\frak I$ if and only if the same equation
  holds for the sums over injective tuples.
The proof is exactly the same as Lemma 41 in \cite{Dyer-Rich3}, using the Mobius inversion.
So we skip it here.\vspace{0.05cm}

\begin{lemm}[\cite{Dyer-Rich3}, Lemma 41]\label{declem1}
$\text{\emph{hom}}_{\frak b}(\frak I)=
  \text{\emph{hom}}_{\frak c}(\frak I)$ for all $\frak I$ if and only if
$\text{\emph{mon}}_\frak b (\frak I) = \text{\emph{mon}}_\frak c(\frak I)$ for all $\frak I$.\vspace{0.05cm}
\end{lemm}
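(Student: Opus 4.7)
The plan is to follow the approach of Dyer--Richerby (Lemma 41 in \cite{Dyer-Rich3}) and use M\"obius inversion on a restricted partition lattice. Throughout, fix $\frak s\in\{\frak b,\frak c\}$ (so $\frak s\ne\frak a$, since $\alpha\ne\beta$ and $\kappa\ne\lambda$) and an instance $\frak I$ on $n$ variables $(y_1,\ldots,y_n)$. Let $\Pi_n^*$ denote the set of partitions of $[n]$ in which positions $1$ and $2$ lie in different blocks, ordered so that $\pi\le\sigma$ iff $\sigma$ is coarser than $\pi$; let $\hat 0\in\Pi_n^*$ be the finest partition, and let $\mu$ denote the M\"obius function of $\Pi_n^*$. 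For each $\pi\in\Pi_n^*$, let $\frak I/\pi$ be the instance on $|\pi|$ variables obtained from $\frak I$ by identifying all variables that share a block of $\pi$, adopting the convention that the block containing $1$ becomes the first variable (still fixed to $\frak a$) and the block containing $2$ becomes the second (still fixed to $\frak s$).

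For the easy direction, I would stratify the sum defining $\text{hom}_\frak s(\frak I)$ by the equality pattern of the index tuple. For each $\pi\in\Pi_n^*$, the sum of $G(\frak a,\frak s,y_3,\ldots,y_n)$ over tuples whose exact equality pattern is $\pi$ coincides with $\text{mon}_\frak s(\frak I/\pi)$, because in the quotient instance the blocks of $\pi$ become the variables and the ``exact pattern $\pi$'' condition becomes precisely the injectivity condition in the definition of $\text{mon}$. Hence
$$\text{hom}_\frak s(\frak I) \;=\; \sum_{\pi\in\Pi_n^*}\text{mon}_\frak s(\frak I/\pi).$$
If $\text{mon}_\frak b(\frak J)=\text{mon}_\frak c(\frak J)$ for every instance $\frak J$, applying this to each $\frak J=\frak I/\pi$ and summing gives $\text{hom}_\frak b(\frak I)=\text{hom}_\frak c(\frak I)$.

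The other direction inverts this identity. Applying the same stratification to each $\frak I/\pi$ yields $\text{hom}_\frak s(\frak I/\pi)=\sum_{\sigma\ge\pi}\text{mon}_\frak s(\frak I/\sigma)$, and M\"obius inversion on $\Pi_n^*$ then gives
$$\text{mon}_\frak s(\frak I) \;=\; \sum_{\pi\in\Pi_n^*}\mu(\hat 0,\pi)\cdot\text{hom}_\frak s(\frak I/\pi).$$
Since the coefficients $\mu(\hat 0,\pi)$ are independent of $\frak s$, the hypothesis $\text{hom}_\frak b(\frak J)=\text{hom}_\frak c(\frak J)$ for all $\frak J$, specialized to each $\frak J=\frak I/\pi$ and combined with these coefficients, immediately yields $\text{mon}_\frak b(\frak I)=\text{mon}_\frak c(\frak I)$.

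The only point requiring verification (rather than a genuine obstacle) is that the quotient $\frak I/\pi$ interacts correctly with the sum defining $G$. Since $G$ is a product of constraint evaluations, each depending only on the variables appearing in one tuple of $\frak I$, identifying two variables that have been forced to take equal values does not change the summand; so the weighted sum over tuples of exact pattern $\pi$ in $\frak I$ genuinely equals $\text{mon}_\frak s(\frak I/\pi)$. This is precisely the step that lets the Dyer--Richerby M\"obius-inversion argument transfer verbatim from counts of homomorphisms to the weighted sums used here.
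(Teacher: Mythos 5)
Your proposal is correct and follows exactly the route the paper intends: the paper omits the proof, citing Lemma 41 of Dyer--Richerby, whose argument is precisely this M\"obius-inversion-over-equality-patterns computation (stratify $\text{hom}_{\frak s}$ by the partition induced on the variables, noting positions $1$ and $2$ stay separated since $\frak a\ne\frak s$, identify each stratum with $\text{mon}_{\frak s}$ of the quotient instance, and invert). Your verification that quotienting variables with equal values leaves the weighted product $G$ unchanged is the right point to check in transferring the argument to the weighted setting, so there is nothing further to add.
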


Finally, the following reformulation gives us a condition that can
  be checked in NP:\vspace{0.05cm}

\begin{lemm}\label{declem2}
$\text{\emph{mon}}_\frak b (\frak I) = \text{\emph{mon}}_\frak c(\frak I)$
  for all $\frak I$ if, and only if,
there exists a bijection $\pi$ from the domain $\frak D$ to itself \emph{(}which we will refer to
  as an \emph{automorphism} from $(\frak D,\frak F)$ to itself\emph{)} such that
$\pi(\frak a)=\pi(\frak a)$, $\pi(\frak b)=\pi(\frak c)$, and
  for every $r$-ary function $g\in \frak F$, we have
\begin{equation}\label{eq:equal}
  g(y_1,\ldots,y_r)= g\Big(\pi(y_1),\ldots,\pi(y_r)\Big),\ \ \ \ \ \
\text{for all $y_1,\ldots,y_r\in \frak D$.}
\end{equation}
\end{lemm}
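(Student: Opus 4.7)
My plan is to treat the two directions separately, with the bulk of the work being the $(\Rightarrow)$ direction via a Lov\'asz-style method.

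For $(\Leftarrow)$, I would start from an automorphism $\pi$ with $\pi(\frak a)=\frak a$ and $\pi(\frak b)=\frak c$. The coordinate-wise map $(y_3,\ldots,y_n)\mapsto(\pi(y_3),\ldots,\pi(y_n))$ restricts to a bijection between injective tuples extending $(\frak a,\frak b)$ and those extending $(\frak a,\frak c)$; applying~(\ref{eq:equal}) to each factor of the product defining $G$ gives $G(\frak a,\frak b,y_3,\ldots,y_n)=G(\frak a,\frak c,\pi(y_3),\ldots,\pi(y_n))$ term-by-term, so $\text{\emph{mon}}_{\frak b}(\frak I)=\text{\emph{mon}}_{\frak c}(\frak I)$ for every instance $\frak I$.

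For $(\Rightarrow)$, I would build a family of canonical test instances. Enumerate $\frak D=\{\frak d_1,\ldots,\frak d_d\}$ with $\frak d_1=\frak a$ and $\frak d_2=\frak b$, write $\phi(\frak d_i)=i$, and set $A:=\{(g,t):g\in\frak F,\,g(t)>0\}$. For each $m:A\to\mathbb{Z}_{\ge 0}$, let $\frak I^*_m$ be the instance on $d$ variables that contains $1+m(g,t)$ copies of the constraint $(g,\phi(t_1),\ldots,\phi(t_{r(g)}))$ for every $(g,t)\in A$. An injective assignment corresponds to a permutation $\pi$ of $\frak D$ via $y_i=\pi(\frak d_i)$, and a direct calculation gives $G_{\frak I^*_m}=0$ unless $\pi$ preserves the support of every $g\in\frak F$, in which case $G_{\frak I^*_m}=C\cdot\prod_{(g,t)\in A} g(\pi(t))^{m(g,t)}$, where $C:=\prod_{(g,t)\in A} g(t)>0$ is $\pi$-independent (the ``$+1$'' factors collect to $\prod g(\pi(t))$, and a support-preserving $\pi$ permutes each $\text{supp}(g)$, leaving this product equal to $C$). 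Writing $S_{\frak s}$ for the set of support-preserving bijections with $\pi(\frak a)=\frak a$, $\pi(\frak b)=\frak s$ and setting $v_\pi:=(g(\pi(t)))_{(g,t)\in A}\in\mathbb{R}_{>0}^A$, the hypothesis applied to every $\frak I^*_m$ yields, after dividing by $C$,
\[
\sum_{\pi\in S_{\frak b}}\prod_{a\in A} v_{\pi,a}^{m_a}
\;=\;
\sum_{\pi\in S_{\frak c}}\prod_{a\in A} v_{\pi,a}^{m_a}
\qquad\text{for every }m\in\mathbb{Z}_{\ge 0}^A.
\]
Grouping the $\pi$'s by $v_\pi$, a multi-dimensional Vandermonde step (see below) forces the multisets $\{v_\pi:\pi\in S_{\frak b}\}$ and $\{v_\pi:\pi\in S_{\frak c}\}$ to coincide. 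Since $\text{id}\in S_{\frak b}$ carries $v_{\text{id}}=(g(t))_{(g,t)\in A}$, some $\pi_0\in S_{\frak c}$ must satisfy $v_{\pi_0}=v_{\text{id}}$, i.e.\ $g(\pi_0(t))=g(t)$ for every $(g,t)\in A$; together with support-preservation (which handles $g(t)=0$), this gives $g(y_1,\ldots,y_r)=g(\pi_0(y_1),\ldots,\pi_0(y_r))$ for all $g$ and all tuples, so $\pi_0$ is the desired automorphism.

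The hard part will be the multi-dimensional Vandermonde step: showing that if $v^{(1)},\ldots,v^{(k)}$ are distinct vectors in $\mathbb{R}_{>0}^A$ and $c_1,\ldots,c_k\in\mathbb{R}$ satisfy $\sum_i c_i\prod_a (v^{(i)}_a)^{m_a}=0$ for every $m\in\mathbb{Z}_{\ge 0}^A$, then all $c_i$ vanish. I would handle this by induction on $|A|$: fixing $m_2,\ldots,m_{|A|}$ and letting $m_1$ range over $\mathbb{Z}_{\ge 0}$, the classical one-dimensional Vandermonde on positive reals separates the partial sums indexed by the distinct first coordinates of the $v^{(i)}$, and the inductive hypothesis inside each group (with one fewer coordinate) finishes the argument. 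This moment machinery is what upgrades the $0$/$1$ support equality (which a single test instance already delivers) to exact value equality, and it is the new ingredient specific to the weighted setting.
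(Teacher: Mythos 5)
Your proof is correct, and it reaches the conclusion by a route that differs in its key mechanics from the paper's. The paper's argument for the hard direction has two separate ingredients: first, by taking $j$ copies of an \emph{arbitrary} instance and applying a one-dimensional Vandermonde argument to the aggregate values of $G$ on injective tuples, it shows (equation (\ref{lastequation})) that every injective tuple $(\frak a,\frak b,\ldots)$ with positive weight has a value-matched injective partner $(\frak a,\frak c,\ldots)$; second, it proves a separate exponent-selection lemma (Lemma \ref{technical}) producing multiplicities $N_1,\ldots,N_{|L|}$ so that a \emph{single} test instance's product equality forces termwise equality of constraint values, from which $\pi$ is read off. You instead work only with a family of canonical instances on $|\frak D|$ variables, one per multiplicity vector $m$, use the ``$+1$'' copies both to kill non-support-preserving assignments and (via the observation that $\prod_{(g,t)}g(\pi(t))$ is invariant under support-preserving $\pi$) to normalize away the constant $C$, and then apply a multidimensional Vandermonde/moment argument to conclude that the multisets $\{v_\pi:\pi\in S_{\frak b}\}$ and $\{v_\pi:\pi\in S_{\frak c}\}$ coincide, extracting $\pi_0$ as a partner of the identity; your inductive proof of the multidimensional Vandermonde step is sound, and your handling of the zero values via injectivity of the coordinatewise map matches the paper's. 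What each approach buys: yours dispenses entirely with the exponent-selection lemma and with the intermediate matched-tuples claim, and actually yields a slightly stronger statement (equality of the full multisets of value vectors); the paper's route keeps the Vandermonde one-dimensional, needs only a single test instance rather than an infinite family, and isolates the matched-tuples statement for arbitrary instances, which is a reusable fact in its own right. Do make explicit the trivial edge case where the support set $A$ is empty (then any bijection fixing $\frak a$ and sending $\frak b$ to $\frak c$ works, using that $\frak a,\frak b,\frak c$ are distinct), which your grouping argument silently assumes away.
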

\begin{proof}
We start with the easier direction:
If $\pi$ exists, then $\text{mon}_\frak b (\frak I) = \text{mon}_\frak c(\frak I)$
  for all $\frak I$.
This is because for any injective $n$-tuple $(\frak a,\frak b,y_3,\ldots,y_n)\in Y_n$,
  we can apply $\pi$ and get a new injective $n$-tuple $(\frak a,\frak c,\pi(y_3),\ldots,\pi(y_n))\in Y_n$
  and this is a bijection from
  $(\frak a,\frak b,y_3,\ldots,y_n)\in Y_n$ and $(\frak a,\frak c,z_3,\ldots,z_n)\in Y_n$.
Moreover, by (\ref{eq:equal}) we have
$$
G(\frak a,\frak b,y_3,\ldots,y_n)=G\big(\frak a,\frak c,\pi(y_3),\ldots,\pi(y_n)\big).
$$
As a result, the two sums $\text{mon}_\frak b(\frak I)$ and $\text{mon}_\frak c(\frak I)$
  over injective tuples must be equal.

The other direction is more difficult.
First, we prove that if $\text{mon}_\frak b (\frak I) = \text{mon}_\frak c(\frak I)$
  for all $\frak I$, then for any $\frak I$ and any tuple $(\frak a,\frak b,y_3,\ldots,y_n)\in Y_n$
  with $G(\frak a,\frak b,y_3,\ldots,y_n)>0$,
  there exists a $(\frak a,\frak c,z_3,\ldots,z_n)\in Y_n$ such that
\begin{equation}\label{lastequation}
G(\frak a,\frak b,y_3,\ldots,y_n)=G(\frak a,\frak c,z_3,\ldots,z_n).
\end{equation}
To prove this we look at the following sequence of instances $\frak J_1=\frak J,\frak J_2,\ldots$
  defined from $\frak I$, where $\frak J_j$ consists of exactly $j$ copies of $\frak J$ over the same set of variables.
We use $G_j$ to denote the $n$-ary function that $\frak J_j$ defines, then
$$
G_j(y_1,\ldots,y_n)=\big( G(y_1,\ldots,y_n)\big)^j,\ \ \ \ \ \text{for all
  $y_1,\ldots,y_n\in \frak D$.}
$$
Let $Q=\{q_1,\ldots,q_{|Q|}\}$ denote the set of all possible positive values of $G$ over $Y_n$;
  let $k_i\ge 0$ denote the number of tuples $(\frak a,\frak b,y_3,\ldots,y_n)\in Y_n$
  such that $G(\frak a,\frak b,y_3,\ldots,y_n)=q_i$, $i\in [|Q|]$;
  and let $\ell_i\ge 0$ denote the number of tuples $(\frak a,\frak c,y_3,\ldots,y_n)\in Y_n$
  such that $G(\frak a,\frak c,y_3,\ldots,y_n)=q_i$, $i\in [|Q|]$.
Then by $\text{mon}_\frak b(\frak I_j)=\text{mon}_\frak c(\frak I_j)$,\vspace{-0.02cm}
$$
\sum_{i\in [|Q|]} k_i\cdot (q_i)^j = \sum_{i\in [|Q|]} \ell_i\cdot (q_i)^j,
\ \ \ \ \ \text{for all $j\ge 1$.}
$$
Viewing $k_i-\ell_i$ as variables,
  the above equation gives us a linear system with a Vandermonde matrix
  if we let $j$ go from $1$ to $|Q|$.
As a result, we must have $k_i=\ell_i$ for all $i\in [|Q|]$,
  and (\ref{lastequation}) follows.

To finish the proof, we need the following technical lemma:\vspace{0.08cm}

\begin{lemm}\label{technical}
Let $Q$ be a finite and nonempty set of positive numbers.
Then for any $k\ge 1$, there exists a sequence of positive integers
  $N_1,\ldots,N_k$ such that
\begin{equation}\label{hahahaha}
q_1^{N_1} q_2^{N_2} \cdots q_k^{N_k}
=(q_1')^{N_1} (q_2')^{N_2} \cdots (q_k')^{N_k},\ \ \ \ \ \text{where $q_1,\ldots,q_k,q_1'\ldots,q_k'
\in Q$}
\end{equation}
if and only if $q_i=q_i'$ for every $i\in [k]$.\vspace{0.08cm}
\end{lemm}
\begin{proof}
The lemma is trivial if $|Q|=1$, so we assume $|Q|\ge 2$.
We use induction on $k$. The basis is trivial: we just set $N_1=1$.
Now assume the lemma holds for some $k\ge 1$, and $N_1,\ldots,N_k$ is the
  sequence for $k$.
We show how to find $N_{k+1}$ so that $N_1,\ldots,N_{k+1}$ satisfies the lemma for $k+1$.
To this end, we let
$$
c_\text{min}=\min_{q>q'\in Q} q/q'>1\ \ \ \ \ \ \text{and}\ \ \ \ \ \
c_\text{max}=\max_{q>q'\in Q} q/q'.
$$
Then we let $N_{k+1}$ be a large enough integer such that
$$
\big(c_\text{min}\big)^{N_{k+1}}> \big(c_\text{max}\big)^{\sum_{i\in [k]} N_i}.
$$

To prove the correctness, we assume (\ref{hahahaha}) holds.
First, we must have $q_{k+1}=q_{k+1}'$. Otherwise, assume without generality that
  $q_{k+1}>q_{k+1}'$, then by (\ref{hahahaha})
$$
\big(c_\text{min}\big)^{N_{k+1}}\le \big(q_{k+1}/q_{k+1}'\big)^{N_{k+1}}=\big(q_1'/q_1\big)^{N_1}\cdots
  \big(q_k'/q_k\big)^{N_k}\le \big(c_\text{max}\big)^{\sum_{i\in [k]}N_k},
$$
which contradicts with the definition of $N_{k+1}$.
Once we have $q_{k+1}=q_{k+1}'$, they can be removed from (\ref{hahahaha})
  and by the inductive hypothesis, we have $q_i=q_i'$ for all $i\in [k]$.
This finishes the induction, and the lemma is proved.\vspace{0.03cm}
\end{proof}

To find $\pi$, we define the following $\frak I$.
It has $|\frak D|$ variables and we denote them by $y_\frak s$, $\frak s\in \frak D$.
(In particular, $y_\frak a$ and $y_\frak b$ are the first and second variables of $\frak I$
  so that later $\text{mon}_\frak s(\frak I)$ is well-defined.)
Let $L$ be the set of all tuples $(g,\frak s_1,\ldots,\frak s_r)$,
  where $g$ is an $r$-ary function in $\frak F$ and
$
g(\frak s_1,\ldots,\frak s_r)>0.
$
We let $N_1,\ldots,N_{|L|}$ be the sequence of positive integers that satisfies
  Lemma \ref{technical} with $k=|L|$ and
$$
Q=\Big\{ g(\frak s_1,\ldots,\frak s_r): ( g,\frak s_1,\ldots,\frak s_r)\in L\Big\}.
$$
Then we enumerate all tuples in $L$ in any order. For the $i$th tuple
  $(g,\frak s_1,\ldots,\frak s_r)\in L$,
  $i\in [|L|]$, we add $N_i$ copies of
  the same tuple $(g,\frak s_1,\ldots,\frak s_r)$ to $\frak I$.
This finishes the definition of $\frak I$.

From the definition of $\frak I$, it is easy to see that
  $G(y_\frak s:y_\frak s=\frak s\ \text{for all $\frak s\in \frak D$})>0$.
Therefore, by (\ref{lastequation}) we know there exists a tuple $(z_\frak s:\frak s\in \frak D)\in Y_n$
  such that $z_\frak a =\frak a$, $z_\frak b=\frak c$, and
$$
G\big(y_\frak s:y_\frak s=\frak s\ \text{for all $\frak s \in \frak D$}\big)=
G\big(z_\frak s:\frak s\in \frak D\big)>0.
$$
We show that $\pi(\frak s)\stackrel{\text{def}}{=}
z_\frak s$, for every $\frak s\in \frak D$, is the bijection that we are looking for.

First, using Lemma \ref{technical}, it follows from the definition of $\frak I$ that
  for every tuple $(g,\frak s_1,\ldots,\frak s_r)\in L$, we have
$$
  g(\frak s_1,\ldots,\frak s_r)= g\big(\pi(\frak s_1),\ldots,\pi(\frak s_r)\big).
$$
So we only need to show that $ g(\pi(\frak s_1),\ldots,\pi(\frak s_r))=0$
  whenever $ g(\frak s_1,\ldots,\frak s_r)=0$.
This follows directly from the fact
  that $\pi$ is a bijection and thus,
  $(\frak s_1,\ldots,\frak s_r)\rightarrow (\pi(\frak s_1),\ldots,\pi(\frak s_r))$
  is also a bijection.\vspace{0.03cm}
\end{proof}

With Lemma \ref{declem1} and Lemma \ref{declem2}, we only need to check
  whether there exists an automorphism $\pi$ from $(\frak D,\frak F)$ to itself
  such that $\pi(\frak a)= \frak a $ and $\pi(\frak b)=\frak c$.
We can just exhaustively check all possible bijections from $\frak D$ to itself,
  and this gives us an algorithm in NP.

\section{Proof of Lemma \ref{lem:simple}}\label{sec:htog}

Let $I$ be an input of $(D,\Gamma)$ with $n$ variables $\xx=(x_1,\ldots,x_n)$ and $m$ tuples,
  and $R$ be the relation it defines.

For each $k\ge 1$, we let $I_k$ denote the following input of $(D,\calF)$:
$I_k$ has $n$ variables $(x_1,\ldots,x_n)$; and for each $(\Theta,i_1,\ldots,i_r)\in I$,
  we add $k$ copies of $(f,i_1,\ldots,i_r)$
  to $I_k$, where $f\in \calF$ is the $r$-ary function that corresponds to $\Theta\in \Gamma$.
We use $F_k(\xx)$ to denote the $n$-ary non-negative function that $I_k$ defines.
Then it is clear that
\begin{equation}\label{haha}
F_k(\xx)=\Big(F_1(\xx)\Big)^k,\ \ \ \ \ \text{for all $\xx\in D^n$.}
\end{equation}
We will show that to compute $|R|$, one only needs to evaluate $Z(I_k)$ for $k$ from
  $1$ to some polynomial of $m$.
This gives us a polynomial-time reduction from $(D,\Gamma)$ to $(D,\calF)$.

Now we let $Q_{m}$ denote the set of all integer tuples
$$
\qq=\Big(q_{i,\tt}\ge 0:\text{$i\in [h]$ and $\tt\in D^{r_i}$ such that $f_i(\tt)>0$}\Big)
$$
that sum to $m$. And let \textsc{Value}$_m$ denote
  the following set of positive numbers:
$$
\textsc{Value}_m=\left\{\prod_{i\in [h],\hspace{0.036cm}\tt\in D^{r_i}} \Big(f_i(\tt)\Big)^{q_{i,\tt}}:
  \qq\in Q_m\right\}.
$$
It is easy to show that both $|Q_m|$ and
  $|\textsc{Value}_m|$ are polynomial in $m$ (as $d,h$ and $r_i$, $i\in [h]$
  are all constants) and can be computed in polynomial time in $m$.
Moreover, by the definition of $\textsc{Value}_m$ we have for every $\xx\in D^n$:
$$
F_1(\xx)>0\ \ \Longrightarrow\ \ F_1(\xx)\in \textsc{Value}_m. 
$$

For every $c\in \textsc{Value}_m$, we let $N_c$ denote the number
  of $\xx\in D^n$ such that $F_1(\xx)=c$. Then we have
\begin{equation}\label{haha1}
Z(I_1)=\sum_{c\in \textsc{Value}_m} N_c\cdot c
\end{equation}
We also have
\begin{equation}\label{haha2}
\big|R\big|=\sum_{c\in \textsc{Value}_m} N_c
\end{equation}
and by (\ref{haha})
\begin{equation}\label{haha3}
Z(I_k)=\sum_{c\in \textsc{Value}_m} N_c\cdot c^k,\ \ \ \ \ \text{for every $k\ge 1$.}
\end{equation}
If we view $\{N_c:c\in \textsc{Value}_m \}$ as variables, then
  by taking $k=1,\ldots,|\textsc{Value}_m|$, (\ref{haha3})
  gives us a Vandermonde system from which we can compute $N_c$, $c\in \textsc{Value}_m$,
  in polynomial time.
We can then use (\ref{haha2}) to compute $|R|$.

This finishes the proof of Lemma \ref{lem:simple}.

\section{Proof of Lemma \ref{lem:hard}}\label{app:hardness}

Assume that $\calF$ is not balanced. Then by definition, there exists
  an input instance $I$ for $(D,\calF)$ such that
\begin{enumerate}
\item It defines an $n$-ary function $F(x_1,\ldots,x_n)$; and\vspace{-0.1cm}
\item There exist integers $a,b:1\le a<b\le n$ such that the following $d^a\times d^{b-a}$
  matrix $\MM$ is not\\ block-rank-$1$:
  the rows are indexed by $\uu\in D^a$ and the columns are indexed by $\vv\in D^{b-a}$,
  and
$$
M(\uu,\vv)=\sum_{\ww\in D^{n-b}} F(\uu,\vv,\ww),\ \ \ \ \ \text{for all $\uu\in D^{a}$
  and $\vv\in D^{b-a}$.}
$$
\end{enumerate}
Because $\MM$ is not block-rank-$1$, by Lemma \ref{trivial2},
  it has two rows that are neither linearly dependent nor orthogonal.
We let $\MM({\uu_1,*})$ and $\MM({\uu_2,*})$ be such two rows, where $\uu_1,\uu_2\in D^a$. Then
\begin{equation}\label{notrank}
0< \big\langle \MM({\uu_1,*}), \MM({\uu_2,*}) \big\rangle^2 <
\big\langle \MM({\uu_1,*}),\MM({\uu_1,*})\big\rangle \cdot \big\langle
\MM({\uu_2,*}),\MM({\uu_2,*})\big\rangle.
\end{equation}

We let $\AA=\MM\MM^{\text{T}}$, which is clearly a
  symmetric and non-negative $d^a\times d^a$ matrix, with both of its rows and columns
  indexed by $\uu\in D^a$.
It then immediately follows from (\ref{notrank}) that $\AA$ is not
  block-rank-$1$, since all the four entries in the $\{\uu_1,\uu_2\}\times
  \{\uu_1,\uu_2\}$ sub-matrix of $\AA$ are positive but this $2\times 2$
  sub-matrix is of rank $2$ by (\ref{notrank}).

To finish the proof, we give a polynomial-time reduction
  from $Z_\AA(\cdot)$ to $(D,\calF)$.
Because the former is \#P-hard by Theorem \ref{bulatovtheo} (since $\AA$ is not block-rank-$1$),
  we know that $(D,\calF)$ is also \#P-hard.

Let $G=(V,E)$ be an input undirected graph of $Z_\AA(\cdot)$.
We construct an input instance $I_G$ of $(D,\calF)$ from $G$, using $I$
  (which is considered
  as a constant here since it does not depend on $G$), as follows.\vspace{0.1cm}
\begin{enumerate}
\item For every vertex $v\in V$, we create $a$ variables over $D$, denoted by
  $x_{v,1},\ldots,x_{v,a}$; and\vspace{-0.12cm}
\item For every edge $e=vv'\in E$, we add $(b-a)+2(n-b)$ variables over $D$, denoted by\vspace{-0.015cm}
$$
y_{e,a+1},\ldots,y_{e,b},z_{e,b+1},\ldots,z_{e,n},z_{e,b+1}',\ldots,z_{e,n}'.\vspace{-0.015cm}
$$\newpage
Then we make a copy of $I$ over the following $n$ variables:\vspace{-0.015cm}
$$
\big(x_{v,1},\ldots,x_{v,a},y_{e,a+1},\ldots,y_{e,b},z_{e,b+1},\ldots,z_{e,n}\big)\vspace{-0.015cm}
$$
as well as the following $n$ variables:\vspace{-0.015cm}
$$
\big(x_{v',1},\ldots,x_{v',a},y_{e,a+1},\ldots,y_{e,b},z_{e,b+1}',\ldots,z_{e,n}'\big).\vspace{-0.015cm}
$$
\end{enumerate}
This finishes the construction of $I_G$.

It is easy to show by the definitions of $\MM$ and $\AA$ above
  that $Z_\AA(G)=Z(I_G)$.
This gives us a polynomial-time reduction from problems $Z_\AA(\cdot)$ to $(D,\calF)$ since
  $I_G$ can be constructed from $G$ in polynomial time.

\section{Equivalence of Balance and Strong Balance}\label{sec:equiv}

In \cite{Dyer-Rich} Dyer and Richerby used the following notion
  of strong balance for unweighted constraint languages $\Gamma$ and showed that
  $(D,\Gamma)$ is in polynomial time if $\Gamma$ is strongly balanced; and
  is \#P-hard otherwise.\vspace{0.06cm}

\begin{defi}
Let $\Gamma$ be an unweighted constraint language over $D$.
We call $\Gamma$ \emph{strongly balanced} if for every input instance $I$
  of $(D,\Gamma)$ \emph{(}which defines an $n$-ary relation $R$\emph{)}
  and for any $a,b,c:1\le a<b\le c\le n$, the following $d^a\times d^{b-a}$
  matrix $\MM$ is block-rank-$1$: the rows are indexed by $\uu\in D^a$
  and the columns are indexed by $\vv\in D^{b-a}$,
\begin{equation}\label{eq:yyyy}
M(\uu,\vv)= \Big|\big\{\ww\in D^{c-b}: \exists\hspace{0.05cm}\zz\in D^{n-c}\ \text{such that}\
  (\uu,\vv,\ww,\zz)\in R\big\}\Big|,\ \ \ \ \ \text{for all $\uu\in D^a$ and $\vv\in D^{b-a}$}.
\end{equation}
There are two special cases.
When $c=b$, $M(\uu,\vv)$ is $1$ if there exists a $\zz\in D^{n-c}$ such that $(\uu,\vv,\zz)\in R$;
  and is $0$ otherwise.
When $n=c$, $M(\uu,\vv)$ is the number of $\ww\in D^{c-b}$ such that $(\uu,\vv,\ww)\in R$.\vspace{0.06cm}
\end{defi}

\begin{theo}\label{theo:sb}
$(D,\Gamma)$ is in polynomial time if $\Gamma$ is strongly balanced; and is \#P-hard otherwise.\vspace{0.06cm}
\end{theo}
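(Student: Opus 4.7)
The plan is to derive Theorem \ref{theo:sb} from Lemma \ref{lem:equiv} applied to the unweighted language $\Gamma$ viewed as a $0$-$1$-valued weighted language. It suffices to prove that, for such $\Gamma$, strong balance is equivalent to balance (Definition \ref{defi:balance}): then $\Gamma$ strongly balanced $\Leftrightarrow$ $\Gamma$ balanced $\Leftrightarrow$ $(D,\Gamma)\in\mathrm{P}$, and the complement gives \#P-hardness. One direction of the equivalence is immediate: balance is precisely the case $c=n$ of strong balance (no $\zz$-block), so strong balance trivially implies balance.

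The substantive content is the converse: balance implies strong balance. Fix an instance $I$ of $(D,\Gamma)$ defining $R$ on $n$ variables, with partition $(\uu,\vv,\ww,\zz)$ of sizes $(a,b-a,c-b,n-c)$, and set $N(\uu,\vv,\ww) = |\{\zz:(\uu,\vv,\ww,\zz)\in R\}|$. The goal is to show that $M_s(\uu,\vv) = |\{\ww:N(\uu,\vv,\ww)>0\}|$ is block-rank-$1$. To simulate the existential quantifier over $\zz$ while staying inside $(D,\Gamma)$, I construct, for each integer $k\ge 1$, an instance $I^{(k)}$ consisting of $k$ disjoint copies of $I$ sharing the $(\uu,\vv,\ww)$-variables but with fresh disjoint copies $\zz_1,\ldots,\zz_k$ of the $\zz$-block. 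This is again an instance of $(D,\Gamma)$, and summing its indicator over all variables except $(\uu,\vv)$ yields the $d^a\times d^{b-a}$ matrix
\[
M^{(k)}(\uu,\vv) \;=\; \sum_{\ww\in D^{c-b}} N(\uu,\vv,\ww)^{k},
\]
which is block-rank-$1$ by balance applied to $I^{(k)}$. All of the $M^{(k)}$ share a common rectangular support, namely that of $M_s$.

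The main obstacle is to pass from block-rank-$1$-ness of the family $\{M^{(k)}\}_{k\ge 1}$ to that of $M_s$ itself. Fix any block $A\times B$ of this common support and a reference point $(\uu_0,\vv_0)\in A\times B$, and let $q_1,\ldots,q_s$ be the distinct positive values attained by $N$ over $(A\times B)\times D^{c-b}$. Writing $n_c(\uu,\vv)$ for the number of $\ww$ with $N(\uu,\vv,\ww)=q_c$, the identities $\sum_{c=1}^{s} n_c(\uu,\vv)\,q_c^{k} = M^{(k)}(\uu,\vv)$ for $k=1,\ldots,s$ form a Vandermonde system, so each $n_c(\uu,\vv)$ is a fixed rational combination of $M^{(1)}(\uu,\vv),\ldots,M^{(s)}(\uu,\vv)$. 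The delicate part, which I expect to carry the real work, is to exploit the $4$-point multiplicative identities $M^{(k)}(\uu,\vv)\,M^{(k)}(\uu_0,\vv_0) = M^{(k)}(\uu,\vv_0)\,M^{(k)}(\uu_0,\vv)$, valid for every $k\ge 1$ by block-rank-$1$-ness of $M^{(k)}$, together with a moment-matching argument in the spirit of Lemma \ref{technical}, to conclude that the multiset $\{N(\uu,\vv,\ww)>0\}_\ww$ at each $(\uu,\vv)\in A\times B$ coincides with that at $(\uu_0,\vv_0)$ up to a multiplicative scaling that factors as $\phi(\uu)\psi(\vv)$. Once multiset proportionality is in hand, $M_s(\uu,\vv) = \phi(\uu)\psi(\vv)\,M_s(\uu_0,\vv_0)$ on the block $A\times B$, establishing the block-rank-$1$ property of $M_s$ and closing the equivalence. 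Combining with Lemma \ref{lem:equiv} then yields both halves of Theorem \ref{theo:sb}.
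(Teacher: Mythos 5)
Your overall route is legitimate and, in substance, is how the paper's own machinery yields this theorem: the paper merely quotes Theorem \ref{theo:sb} from Dyer--Richerby, but the nontrivial half of your equivalence is exactly its Lemma \ref{lem:2} (balance implies strong balance), and your instances $I^{(k)}$ --- $k$ copies of $I$ sharing the variables $x_1,\ldots,x_c$ with fresh disjoint $\zz$-blocks, giving $M^{(k)}(\uu,\vv)=\sum_{\ww}N(\uu,\vv,\ww)^k$ --- are precisely the paper's $I_k$. The trivial direction (balance is the case $c=n$ of strong balance), the identification of unweighted balance with Definition \ref{defi:balance} for the $0$-$1$ language, and the appeal to Lemma \ref{lem:equiv} are all fine.

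The gap is in your finishing step, which you yourself flag as ``expected to carry the real work.'' The intermediate claim you propose --- that on a block the multiset $\{N(\uu,\vv,\ww)>0\}_{\ww}$ coincides with the one at $(\uu_0,\vv_0)$ up to a scaling factoring as $\phi(\uu)\psi(\vv)$ --- does not follow from the identities you have derived, and is false as a general inference: the relations $M^{(k)}(\uu,\vv)\,M^{(k)}(\uu_0,\vv_0)=M^{(k)}(\uu,\vv_0)\,M^{(k)}(\uu_0,\vv)$ for all $k\ge 1$ are consistent with fibers $\{1\}$ at $(\uu_0,\vv_0)$, $\{2,3\}$ at $(\uu,\vv_0)$, $\{5\}$ at $(\uu_0,\vv)$ and $\{10,15\}$ at $(\uu,\vv)$ (every $M^{(k)}$ is rank $1$ on this $2\times 2$ block), yet no two of these multisets agree up to scaling --- they do not even have equal cardinalities. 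Fortunately you need much less, and the fix is short. Fix $\uu_1,\uu_2$ and $\vv_1,\vv_2$ in one block of the common support. Block-rank-$1$-ness of $M^{(k)}$ gives, for every $k\ge 1$,
\[
\sum_{\ww,\ww'\in D^{c-b}}\bigl(N(\uu_1,\vv_1,\ww)\,N(\uu_2,\vv_2,\ww')\bigr)^{k}
\;=\;
\sum_{\ww,\ww'\in D^{c-b}}\bigl(N(\uu_1,\vv_2,\ww)\,N(\uu_2,\vv_1,\ww')\bigr)^{k}.
\]
Discarding the zero terms, both sides are power sums of finite multisets of positive integers; equality for all $k$ (a Vandermonde argument) forces the two multisets to have the same number of elements, which is exactly $M_s(\uu_1,\vv_1)M_s(\uu_2,\vv_2)=M_s(\uu_1,\vv_2)M_s(\uu_2,\vv_1)$. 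Since every such $2\times 2$ identity holds inside each block and the support is rectangular, $M_s$ is block-rank-$1$. This cardinality argument is the paper's own conclusion of Lemma \ref{lem:2} (``the two sides must have the same number of positive terms''); no recovery of the fiber statistics, no appeal to Lemma \ref{technical}, and no proportionality of fibers is needed.
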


Notably the difference between the notion of  balance we used for weighted languages $\calF$
  (Definition~\ref{defi:balance})
  and the one above for unweighted languages $\Gamma$ \cite{Dyer-Rich} is that we do not
  allow the use of existential quantifiers in the former.
One can similarly define the following notion of balance for unweighted $\Gamma$:\vspace{0.06cm}

\begin{defi}
Let $\Gamma$ be an unweighted constraint language over $D$.
We call $\Gamma$ \emph{balanced} if for every instance $I$
  of $(D,\Gamma)$ \emph{(}which defines an $n$-ary relation $R$\emph{)}
  and for any $a,b :1\le a<b \le n$, the following $d^a\times d^{b-a}$
  matrix $\MM$ is block-rank-$1$: the rows are indexed by $\uu\in D^a$
  and the columns are indexed by $\vv\in D^{b-a}$,
\begin{equation}\label{eq:yyyy2}
M(\uu,\vv)= \Big|\big\{\ww\in D^{n-b}:
  (\uu,\vv,\ww )\in R\big\}\Big|,\ \ \ \ \ \text{for all $\uu\in D^a$ and $\vv\in D^{b-a}$}.\vspace{0.06cm}
\end{equation}
\end{defi}

We show below that these two notions, strong balance and balance, are equivalent.\vspace{0.06cm}



\begin{lemm}[Equivalence of Balance and Strong Balance]\label{lem:2}
If $\Gamma$ is balanced, then it is also strongly balanced.\vspace{0.06cm}
\end{lemm}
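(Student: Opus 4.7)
The strategy is to deduce strong balance from balance by going through the complexity dichotomy, rather than trying to directly transform counts of complete tuples into counts of projected tuples (the appearance of an existential quantifier in strong balance makes a direct matrix-level translation awkward). Both notions characterize tractability on their respective sides---strong balance for unweighted $\Gamma$ (Theorem \ref{theo:sb}) and balance for weighted $\calF$ (Lemma \ref{lem:equiv})---so if $\Gamma$ is balanced we can view it as a 0--1 valued weighted language, apply the weighted tractability result, and then pull back strong balance via the unweighted dichotomy.

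Concretely: given $\Gamma$, form the weighted language $\calF = \{f_\Theta : \Theta \in \Gamma\}$ where $f_\Theta$ is the 0--1 indicator of $\Theta$. For any instance $I$ and the function $F_I$ it defines,
$$
\sum_{\ww \in D^{n-b}} F_I(\uu, \vv, \ww) \;=\; \big|\big\{\ww \in D^{n-b} : (\uu, \vv, \ww) \in R_I\big\}\big|,
$$
so the weighted balance matrix of Definition \ref{defi:balance} for $\calF$ coincides entry-for-entry with the unweighted balance matrix in (\ref{eq:yyyy2}) for $\Gamma$. Thus $\Gamma$ being balanced implies $\calF$ is balanced. Applying Lemma \ref{lem:equiv} then gives that $(D, \calF)$ is in polynomial time, and since the two problems $(D, \calF)$ and $(D, \Gamma)$ compute the same quantity $|R_I|$, $(D, \Gamma)$ is in polynomial time as well. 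Finally, invoking the Dyer--Richerby dichotomy (Theorem \ref{theo:sb}) forces $\Gamma$ to be strongly balanced.

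The main thing to verify is that this argument is not circular: the proofs of Lemmas \ref{lem:hard}, \ref{lem:poly} and \ref{lem:equiv} must not themselves invoke Lemma \ref{lem:2}. This is evident from Sections \ref{app:hardness}, \ref{sec:vector} and \ref{sec:counting}, which only use (weak) balance together with the frame subroutines from Lemma \ref{usefulalg}, and never appeal to strong balance. One also implicitly uses the standard assumption $\mathrm{P} \neq \#\mathrm{P}$ when concluding that a problem in polynomial time cannot also be \#P-hard, but this is implicit throughout the dichotomy statements of the paper. A direct combinatorial alternative---taking $j$ copies of $I$ with shared $(\uu,\vv,\ww)$ and disjoint $\zz$-variables to realize $\sum_{\ww} N(\uu,\vv,\ww)^j$ as a block-rank-$1$ matrix, and then Vandermonde-inverting to recover $|\{\ww : N(\uu,\vv,\ww)>0\}|$---is possible but requires additional work to conclude that the resulting linear combination is block-rank-$1$ rather than merely supported on the same blocks, so I prefer the tractability-based route above.
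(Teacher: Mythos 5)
There is a genuine gap: your argument proves the lemma only under the assumption $\mathrm{FP}\neq\#\mathrm{P}$, while the statement is an unconditional combinatorial fact and the paper proves it unconditionally. The step ``$(D,\Gamma)$ is in polynomial time, hence by Theorem~\ref{theo:sb} $\Gamma$ is strongly balanced'' is exactly the place where the hypothesis enters: the dichotomy says the problem is \#P-hard whenever $\Gamma$ is \emph{not} strongly balanced, so to rule that case out you must know that a polynomial-time solvable problem cannot be \#P-hard, which is the unproven assumption. This is not ``implicit throughout the dichotomy statements of the paper'' --- those statements are themselves unconditional; the assumption is only needed when one tries to read structural information back out of a tractability result, which is what your route does. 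Since Lemma~\ref{lem:2} is meant as a self-contained equivalence of two combinatorial conditions (indeed it is the observation that justifies replacing strong balance by balance in the first place), a proof conditioned on a complexity hypothesis does not establish it. Your reduction of unweighted balance of $\Gamma$ to weighted balance of the $0$--$1$ language $\calF$, and your check that Lemmas~\ref{lem:hard}, \ref{lem:poly}, \ref{lem:equiv} do not invoke Lemma~\ref{lem:2}, are both correct, but they cannot repair this.

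The ``direct combinatorial alternative'' you set aside at the end is essentially the paper's actual proof, and the extra work you feared is milder than showing any linear combination is block-rank-$1$. Fix $a<b\le c\le n$ and build $I_k$ by taking $k$ copies of $I$ that share $x_1,\ldots,x_c$ and use fresh variables $\yy_1,\ldots,\yy_k$ for the last $n-c$ coordinates. Balance applied to $I_k$ makes the matrix $M^{[k]}(\uu,\vv)=\sum_{\ww\in W_{\uu,\vv}}|Y_{\uu,\vv,\ww}|^k$ block-rank-$1$ for every $k$, where $W_{\uu,\vv}$ is the set of $\ww$ admitting an extension and $Y_{\uu,\vv,\ww}$ the set of extensions; moreover $M^{[k]}$ has the same support as the strong-balance matrix $\MM$ of (\ref{eq:yyyy}). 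Within a block, the rank-$1$ identity for $M^{[k]}$ gives, for all $k\ge 1$, an equality of two sums of $k$-th powers of positive numbers, and comparing the number of positive terms on the two sides yields $|W_{1,1}|\cdot|W_{2,2}|=|W_{1,2}|\cdot|W_{2,1}|$, which is exactly the rank-$1$ condition for $\MM$ on that block. So the argument never needs the intermediate matrices to be anything more than entrywise power sums; you should carry out this construction rather than route through the dichotomy.
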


\begin{proof}
We assume that $\Gamma$ is balanced.
Let $I$ be any instance of $(D,\Gamma)$ which defines an $n$-ary relation $R$.
Let~$a,b$ and  $c$ be integers such that $1\le a<b\le c\le n$.
It suffices to show that the matrix $\MM$ in (\ref{eq:yyyy}) is block-rank-$1$.\newpage

For this purpose, we define a new input instance $I_k$ of $(D,\Gamma)$ for each $k\ge 1$:\vspace{0.06cm}
\begin{enumerate}
\item First, $I_k$ has $c+k(n-c)$ variables in the following order:
$$
x_1,\ldots,x_c,y_{1,c+1},\ldots,y_{1,n},\ldots,y_{k,c+1},\ldots,y_{k,n}.
$$
Below we let $\yy_i$, $i\in [k]$, denote $(y_{i,c+1},\ldots,y_{i,n})$ for convenience.\vspace{-0.1cm}

\item For each $i\in [k]$, we add a copy of $I$ on the following $n$ variables of $I_k$:
$
x_1,\ldots,x_c,y_{i,c+1},\ldots,y_{i,n}.
$
\end{enumerate}
It is clear that $I_1$ is exactly $I$.
We also use $R_k$ to denote the relation that $I_k$ defines, $k\ge 1$.

Because $\Gamma$ is balanced, the following $d^{a}\times d^{b-a}$ matrix $\MM^{[k]}$
  is block-rank-$1$: For $\uu\in D^a$ and $\vv\in D^{b-a}$,\vspace{0.03cm}
$$
M^{[k]}(\uu,\vv)= \left| \Big\{ (\ww,\yy_1,\ldots,\yy_k):
  \ww\in D^{c-b},\yy_1,\ldots,\yy_k\in D^{n-c}\ \text{and}\
  (\uu,\vv,\ww,\yy_1,\ldots,\yy_k)\in R_k\Big\} \right|.\vspace{0.03cm}
$$
From the definition of $I_k$, we have
$M(\uu,\vv)>0$ if and only if $M^{[k]}(\uu,\vv)>0$, for all
  $\uu\in D^a$ and $\vv\in D^{b-a}$.

Therefore, there exist pairwise disjoint and nonempty subsets of $D^a$,
  denoted $A_1,\ldots,A_s$, and pairwise disjoint and nonempty subsets of $D^{b-a}$,
  denoted $B_1,\ldots,B_s$, for some $s\ge 0$, such that
$$
M(\uu,\vv)>0\ \Longleftrightarrow\ M^{[k]}(\uu,\vv)>0\ \Longleftrightarrow\
\uu\in A_\ell\ \text{and}\ \vv\in B_\ell\ \text{for some $\ell\in [s]$.}
$$
Now to prove that $\MM$ is block-rank-$1$, we only need to show that for every $\ell\in [s]$,
\begin{equation}\label{finaleq}
M(\uu_1,\vv_1)\cdot M(\uu_2,\vv_2)=M(\uu_1,\vv_2)\cdot M(\uu_2,\vv_1),\ \ \ \ \
\text{for all $\uu_1,\uu_2\in A_\ell$ and $\vv_1,\vv_2\in B_\ell$.}
\end{equation}

To prove (\ref{finaleq}), we let
$$
W_{i,j}= \Big\{ \ww\in D^{c-b}: \exists\hspace{0.06cm}\yy\in D^{n-c}\ \text{such that}\
  (\uu_i,\vv_j,\ww,\yy)\in R\Big\},\ \ \ \ \ \text{for $i,j\in \{1,2\}$.}
$$
Furthermore, for every $\ww\in W_{i,j}$, we let
  $Y_{i,j,\ww}$ denote the (nonempty) set of $\yy\in D^{n-c}$ such that
  $(\uu_i,\vv_j,\ww,\yy)\in R$.
Now using $W_{i,j}$ and $Y_{i,j,\ww}$, it follows from the definition of $I_k$ that
$$
M^{[k]}(\uu_i,\vv_j)=\sum_{\ww\in W_{i,j}} \Big| Y_{i,j,\ww}\Big|^k.
$$
Because $\MM^{[k]}$ is block-rank-$1$, we have the following equation for every $k\ge 1$:
$$
\sum_{\ww\in W_{1,1},\ww'\in W_{2,2}} \Big(\hspace{0.08cm}\big| Y_{1,1,\ww}\big|\cdot
  \big| Y_{2,2,\ww'}\big|\hspace{0.08cm}\Big)^k =
\sum_{\ww\in W_{1,2},\ww'\in W_{2,1}} \Big(\hspace{0.08cm}\big| Y_{1,2,\ww}\big|\cdot
  \big| Y_{2,1,\ww'}\big|\hspace{0.08cm}\Big)^k.
$$
Since the equation above holds for every $k\ge 1$, the two sides must have the same number
  of positive terms.
By definition, we have $Y_{i,j,\ww}$ is nonempty for all $\ww\in W_{i,j}$.
As a result, we have
$$
|W_{1,1}|\cdot |W_{2,2}| = |W_{1,2}|\cdot |W_{2,1}|
$$
and (\ref{finaleq}) follows.
This finishes the proof of Lemma \ref{lem:2}.
\end{proof}

\newpage
\bibliographystyle{plain}
\begin{flushleft}
\bibliography{Reference}
\end{flushleft}

\end{document}